\newtheorem{theorem}{Theorem}
\newtheorem{lemma}{Lemma}
\DeclareMathOperator*{\argmin}{argmin}
\DeclareMathOperator*{\argmax}{argmax}
\newcommand{\br}{\mathit{br}}
\theoremstyle{definition}
\newtheorem*{definition*}{Definition}
\newcommand{\citet}[1]{\citeauthor{#1} \shortcite{#1}} 
\newcommand{\citep}{\cite} 
\newcommand{\newterm}[1]{\textbf{\textit{#1}}}
\providecommand{\norm}[1]{\lVert#1\rVert}
\title{Regularization for Strategy Exploration in Empirical Game-Theoretic Analysis}
\author{
Yongzhao Wang
\and
Michael P. Wellman
\affiliations
University of Michigan
\emails
\{wangyzh, wellman\}@umich.com
}
\begin{document}

\maketitle

\begin{abstract}
In iterative approaches to empirical game-theoretic analysis (EGTA), the strategy space is expanded incrementally based on analysis of intermediate game models. 
A common approach to \textit{strategy exploration}, represented by the double oracle algorithm, is to add strategies that best-respond to a current equilibrium. 
This approach may suffer from overfitting and other limitations, leading the developers of the policy-space response oracle (PSRO) framework for iterative EGTA to generalize the target of best response, employing what they term \textit{meta-strategy solvers} (MSSs). 
Noting that many MSSs can be viewed as perturbed or approximated versions of Nash equilibrium, we adopt an explicit regularization perspective to the specification and analysis of MSSs.
We propose a novel MSS called \textit{regularized replicator dynamics} (RRD), which simply truncates the process based on a regret criterion. 
We show that RRD outperforms existing MSSs in various games.
We extend our study to three-player games, for which the payoff matrix is cubic in the number of strategies and so exhaustively evaluating profiles may not be feasible. 
We employ a profile search method that can identify solutions from incomplete models, and combine this with iterative model construction using a regularized MSS\@. 
Finally, we find through experiments that the regret of best-response targets is a strong indicator for the performance of strategy exploration, which provides an explanation for the effectiveness of regularization in PSRO.
\end{abstract}

\section{Introduction}

The methodology of \newterm{empirical game-theoretic analysis} (EGTA) \citep{TuylsPLHELSG20,Wellman16putting} provides a broad toolbox of techniques for game reasoning with models based on simulation data.%
\footnote{A table of acronyms is provided in Appendix~\ref{app:acronyms}.}
As many multiagent systems of interest are not easily expressed or tackled analytically, EGTA offers an alternative approach whereby a space of strategies is examined through simulation, combined with game model induction and inference.
The number of strategies that can be explicitly incorporated in game models is significantly limited by computational constraints, hence the selection of strategies to include is pivotally important.
For accurate analysis results, we require that the included strategies are high-performing and cover the key strategic issues \citep{balduzzi2019open}.
The challenge of efficiently assembling an effective portfolio of strategies for EGTA is called the \newterm{strategy exploration} problem \citep{Jordan10sw}.


Strategy exploration in EGTA is most clearly formulated within an iterative procedure, whereby generation of new strategies is interleaved with game model estimation and analysis.
The \newterm{Policy Space Response Oracle} (PSRO) algorithm of \citet{Lanctot17} provides a flexible framework for iterative EGTA, where at each iteration, new strategies are generated through reinforcement learning (RL)\@.
The learning player trains in an environment where other players are fixed in a profile (pure or mixed) comprising strategies from previous iterations. 
The key design question is how to set the other-player profile to be employed as a training target. 
In PSRO, the component that derives this target is called a \newterm{meta-strategy solver} (MSS), as it takes an empirical game model as input and ``solves'' it to produce the target profile.
The learning agent then employs RL to search for a strategy best-responding to the MSS target.
In effect, specifying an MSS defines the strategy exploration method for PSRO.


An obvious choice for MSS is the solution concept employed as the objective game analysis, typically Nash equilibrium (NE)\@.
Incrementally adding strategies that are best-responses to NE of the current strategy set is known as the \newterm{double oracle} (DO) algorithm \citep{mcmahan2003planning}, and PSRO with NE as MSS is essentially DO with RL for computing (approximate) best response.
Though DO is often effective, there is ample evidence that best-response to NE is not always the best approach to strategy exploration.
\citet{Schvartzman09a} observed cases where it would approach a true equilibrium extremely slowly, such that even adding random strategies could provide substantial speedups.
More generally, \citet{Lanctot17} argued that best-responding to Nash overfits to the current equilibrium strategies, and thus tends to produce results that do not generalize to the overall space. 
This was indeed their major motivation for defining a generalized MSS concept for strategy exploration.
For example, as an alternative MSS \citet{Lanctot17} proposed \emph{projected replicator dynamics} (PRD), which employs a replicator dynamics (RD) search for equilibrium, truncating the replicator updates to ensure a lower bound on probability of playing each pure strategy.

We take a further step in this direction and adopt an explicit regularization perspective to the specification and analysis of MSSs. 
We propose a novel MSS called \newterm{regularized replicator dynamics} (RRD), which truncates the NE search process in intermediate game models based on a regret criterion. 
Specifically, at each iteration of PSRO, the best-response target profile is updated by running RD, stopping if the regret of the current profile with respect to the empirical game meets a specified regret threshold.
The regret threshold is a hyperparameter, which may be adjusted to suit a particular game class, or annealed to control the degree of regularization across iterations. 
We assess the performance of RRD in various games and show that RRD outperforms several existing MSSs in terms of convergence rate and quality of intermediate empirical game models.

As the size of a payoff matrix is exponential in the number of players, the cost of maintaining completely specified models over the iterations of PSRO can be prohibitive beyond two players.
To mitigate this issue, we employ a PSRO-compatible profile search method, called \newterm{backward profile search} (BPS), which finds solution concepts without simulating the whole payoff matrix.
We combine RRD with BPS,
and demonstrate the effectiveness of this combination in a three-player game.

Finally, our experiments shed light on the source of the benefit of regularization for strategy exploration.
Across a variety of settings, we find that the approximate empirical-game NE produced by RRD tend to have \textit{lower regret in the full game}, compared to exact NE of the empirical game.
This not only provides an explanation for the benefits of regulation, it may also suggest a way to evaluate the potential of novel MSS designs in PSRO-related approaches.

Contributions of this study include:
\begin{enumerate}
    \item RRD: a novel MSS that truncates the NE search process in intermediate game models based on a regret criterion. Our MSS exhibits desired proprieties for strategy exploration (e.g., improved adaptability across games) compared to previous MSSs. 
    We demonstrate that RRD outperforms relevant alternatives from the literature in various games.
    \item A comprehensive analysis of learning with RRD, including the performance stability with respect to the regret threshold.
    \item Integration of method for selective profile evaluation with PSRO, enabling game-solving without simulating the whole payoff matrix. 
    We combine RRD with this method and show its effectiveness for learning in a three-player game.
    \item Demonstration of the key relationship between regret of best-response targets and the performance of MSSs in PSRO.
\end{enumerate}

\section{Related Work on Strategy Exploration}
\label{sec:related}

The first instance of automated strategy generation in EGTA was a genetic search over a parametric strategy space, optimizing performance against an equilibrium of the empirical game \citep{Phelps06}. 
\citet{Schvartzman09} deployed tabular RL as a best-response oracle in EGTA for strategy generation.
These same authors framed the general problem of \newterm{strategy exploration} in EGTA and investigated whether better options exist beyond best-responding to an equilibrium \citep{Schvartzman09a}.
\citet{Jordan10sw} further extended this line of work by adding strategies that maximize the deviation gain from an empirical rational closure.

Investigation of strategy exploration was advanced significantly by introduction of the PSRO framework \citep{Lanctot17}.
PSRO applied deep RL as an approximate best-response oracle to certain designated other-agent profile selected by the MSS.
When employing NE as MSS, PSRO reduces to the DO algorithm \citep{mcmahan2003planning}.
To generate strategy effectively, \citet{Lanctot17} balanced between overfitting to NE and generalizing to the strategy space outside the empirical game, and proposed \newterm{projected replicator dynamics} (PRD), which employs an RD search for equilibrium \citep{taylor1978evolutionary,smith1973logic} and ensures a lower bound on probability of playing each pure strategy. 
For simplicity, we often refer to an MSS as shorthand for PSRO with that MSS when the context is unambiguous.
For example, we may say ``PRD'' to mean ``PSRO with PRD''.

PSRO can also be viewed as generalizing some classic game-learning dynamics.
For example, selecting a uniform distribution over current strategies as MSS essentially reproduces the classic \newterm{fictitious play} (FP) algorithm \citep{brown1951iterative}. 
Moreover, an MSS that simply extracts the most recent strategy duplicates the \newterm{iterated best response} algorithm. 
Note that the MSSs generating these dynamics are solvers in only a trivial sense; they do not substantively employ an empirical game model as they derive from the strategy sets directly.

Following the line of PSRO, some works propose MSSs that effectively \textit{regularize} the target profile to prevent from best-responding to an exact equilibrium. 
Specifically, \citet{wang19sywsjf} employed a mixture of NE and uniform, which essentially samples whether to apply DO or FP for every PSRO iteration, thus illustrating the possibility of combining MSSs.
\citet{Wright19} added an adjustment step to DO, which fine-tunes the generated policy network against a mix of previous equilibrium strategies. 
\citet{balduzzi2019open} introduced a new MSS, called \newterm{rectified Nash}, designed to increase diversity of empirical strategy space. 
\citet{Dinh22} proposed an MSS for two-player zero-sum game that applies online learning to the empirical game and outputs the online profile as a best-response target. 
Beyond selecting NE as a solution concept, \citet{muller2019generalized} proposed a new MSS based on an evolutionary-based concept, $\alpha$-rank \citep{omidshafiei2019alpha},  and \citet{marris2021multi} proposed maximum welfare coarse correlated equilibrium (MWCCE), and maximum Gini coarse correlated equilibrium (MGCCE) for computing correlated equilibria, within the PSRO framework.

\section{Preliminaries}
A normal-form game $\mathcal{G}=(N,(S_i),(u_i))$ comprises a finite
set of players $N$ indexed by $i$, a non-empty set of strategies
$S_i$ for player $i\in N$, and a utility function $u_i: \prod_{j \in N}S_j \rightarrow \mathbb{R}$ for player $i\in N$.

A mixed strategy $\sigma_i$ is a probability distribution over strategies in $S_i$, with $\sigma_i(s_i)$ denoting the probability player~$i$ plays strategy $s_i$. 
We adopt conventional notation for the other-agent profile: $\sigma_{-i}=\prod_{j \ne i}\sigma_j$. 
Let $\Delta(\cdot)$ represent the probability simplex over a set. The mixed strategy space for player~$i$ is given by $\Delta(S_i)$. Similarly, $\Delta(S)=\prod_{i \in N}\Delta(S_i)$ is the mixed profile space.

Player~$i$'s \newterm{best response} to profile $\sigma$ comprises strategies yielding maximum payoff for~$i$, fixing other-player strategies:
\begin{displaymath}
\br_i(\sigma_{-i}) \equiv \argmax_{\sigma_i'\in \Delta(S_i)} u_i(\sigma_i', \sigma_{-i}).
\end{displaymath}
Let $\br(\sigma)\equiv\prod_{i \in N}\br_i(\sigma_{-i})$ be the overall best-response correspondence for a profile $\sigma$. A \newterm{Nash equilibrium} (NE) is a profile $\sigma^*$ such that $\sigma^*\in \br(\sigma^*)$. 

Player~$i$'s \newterm{regret} for profile $\sigma$ in game $\mathcal{G}$ is given by 
\begin{displaymath}
    \rho^{\mathcal{G}}_i(\sigma) \equiv \max_{s_i'\in S_i}u_i(s_i', \sigma_{-i})-u_i(\sigma_i, \sigma_{-i}).
\end{displaymath}
Regret captures the maximum player~$i$ can gain in expectation by unilaterally deviating from its mixed strategy in $\sigma$ to an alternative strategy in $S_i$\@. 
An NE has zero regret for every player. 
A profile is said to be an \newterm{$\epsilon$-Nash equilibrium} ($\epsilon$-NE) if no player can gain more than $\epsilon$ by unilateral deviation. 
We define the regret of a strategy profile $\sigma$ as the sum over player regrets:
\begin{displaymath}
    \rho^{\mathcal{G}}(\sigma) \equiv \sum_{i\in N} \rho^{\mathcal{G}}_i(\sigma).
\end{displaymath}

A \newterm{restricted game} $\mathcal{G}_{S\downarrow X}$ is a projection of full game~$\mathcal{G}$, in which players choose from restricted strategy sets $X_i\subseteq S_i$.
An \newterm{empirical game} $\mathcal{\hat{G}}$ is a model of true game $\mathcal{G}$ where payoffs are estimated through simulation.
Thus, $\mathcal{\hat{G}}_{S\downarrow X}=(N,(X_i),(\hat{u}_i))$ denotes an empirical game model where $\hat{u}$ is an estimated projection of $u$ onto the strategy space~$X$\@.

\newterm{Replicator dynamics} (RD) describes an evolving trajectory of mixed profiles, inspired by natural selection \citep{taylor1978evolutionary,smith1973logic}.
RD is commonly employed as a heuristic equilibrium search algorithm. 
We consider a discrete form of RD, where player~$i$'s probability of playing strategy is updated in proportion to its payoff for deviating to that strategy from the current mixture. 
Mathematically, the replicator equation for player $i$'s strategy $s_i$ in a current profile $\sigma$ is given by
\begin{displaymath}
     \frac{d\sigma_i(s_i)}{dt} = \sigma_i(s_i)[u_i(s_i,\sigma_{-i}) - u_i(\sigma_i, \sigma_{-i})].
\end{displaymath}
At each iteration of RD, player $i$'s mixed strategy $\sigma_i$ is updated by $\sigma_i\gets P(\sigma_i + \alpha \frac{d\sigma_i}{dt})$, where $\alpha$ is a step size for RD and $P$ is a projection operator to the strategy simplex, namely $P(\sigma_i)=\argmin_{\sigma_i'\in \Delta}\norm{\sigma_i'-\sigma_i}_2$. 


\newterm{PSRO} is presented below as Algorithm~\ref{alg: PSRO}.
Choice of MSS dictates the strategy exploration approach.

\begin{algorithm}[H]
    \caption{PSRO, parametrized by solver MSS}
    \label{alg: PSRO}
    \begin{algorithmic}[1]
        \REQUIRE initial strategy sets $X$
        \STATE Estimate $\mathcal{\hat{G}}_{S\downarrow X}$ by simulating  $\sigma \in X$
        \STATE Initialize target $\sigma_i \gets \text{Uniform}(X_i)$ 
        \FOR{PSRO iteration $\tau=1,2,\dotsc, \mathcal{T}$}
        \FOR{player $i\in N$}
        \FOR{many RL training episodes}
        \STATE Sample a profile $s_{-i}\in \sigma_{-i}$
        \STATE Train BR oracle $s_i'$ against $s_{-i}$
        \ENDFOR
        \STATE $X_i \gets X_i \cup \{s_i'\}$
        \ENDFOR
        \STATE Update $\mathcal{\hat{G}}_{S\downarrow X}$ by simulating missing profiles over $X$
        \STATE Compute best-response target $\sigma \gets \text{MSS}(\mathcal{\hat{G}}_{S\downarrow X})$ 
        \ENDFOR
        \STATE \textbf{Return} $\mathcal{\hat{G}}_{S\downarrow X}$
    \end{algorithmic}
\end{algorithm}

\section{Regularization for Strategy Exploration}
\subsection{Regularized Replicator Dynamics}

To avoid overfitting a response to NE, we adopt an explicit regularization perspective on strategy exploration.
Specifically, we propose a method to derive approximate NE by truncating an RD-based search.
Our new MSS, called \newterm{regularized RD} (RRD), simply runs RD on the empirical game, stopping when the regret of the current profile (w.r.t the empirical game) meets a specified regret threshold $\lambda$, or a maximum number of iterations is reached.
In the RRD procedure (Algorithm~\ref{alg: RD}), each player's strategy is initialized with a uniform distribution over strategies in the empirical game. 
Then the replicator equation is iteratively applied until the regret of the current profile (w.r.t the empirical game) becomes smaller than the regret threshold $\lambda$. 
Since RD does not generally converge to an exact equilibrium, there is no guarantee a finite regret threshold $\lambda$ will ever be reached.
We therefore set a maximum number of iterations $M$, and if the limit is reached return the profile with the lowest regret found to that point.

Note that RRD supports direct control of the degree of regularization through an explicit parameter: the regret threshold.
This parameter is meaningful across games with different strategy sets, as long as the utility scales on which regret is measured are comparable.

\begin{algorithm}[H]
    \caption{RRD}\label{alg: RD}
    \textbf{Parameters}: regret threshold $\lambda$, RD step size $\alpha$ \\
    \textbf{Input}: an empirical game $\mathcal{\hat{G}}_{S\downarrow X}$ 
    \begin{algorithmic}[1]
        \STATE Initialize RD with $\sigma_i \gets \text{Uniform}(X_i)$
        \WHILE{$\rho^{\mathcal{\hat{G}}_{S\downarrow X}}(\sigma) > \lambda$}
        \FOR{player $i\in N$}
        \STATE $\sigma_i\leftarrow P(\sigma_i + \alpha \frac{d\sigma_i}{dt})$
        \ENDFOR
        \ENDWHILE
        \STATE \textbf{Return} $\sigma$
    \end{algorithmic}
\end{algorithm}


The procedure of PSRO with RRD is obtained by employing RRD as the MSS in Algorithm~\ref{alg: PSRO}.
A slight modification readily supports annealing schemes where the regret threshold for RRD is varied across PSRO iterations.

\subsection{Selective Profile Evaluation using BPS}
\label{sec:multiple players}
One obstacle to scaling PSRO is that the size of the empirical game grows exponentially in the number of players. 
Even in games with only a few players (e.g., three or four), exhaustive simulation of the payoff matrix may become infeasible as the strategy space grows.
Fortunately, it is often possible to derive solutions from only partial payoff matrices \citep{Fearnley15}, and prior EGTA researchers have developed methods for selectively evaluating strategy profiles in search for equilibrium \cite{Jordan08vw,Sureka:2005uq}.
For example, the algorithm of \citet{brinkman2016shading} maintains a set of \newterm{complete subgames} of the empirical game (i.e., maximal strategy sets over which profiles have been exhaustively evaluated). 
NE of these subgames are proposed as candidate equilibria of the empirical game. 
For each candidate, all of the one-player deviations to strategies outside the subgame are evaluated.
If there is no beneficial deviation to the candidate, then the candidate is \newterm{confirmed} as an NE of the empirical game. 
Otherwise, subgames are extended by the beneficially deviating strategies and the search continues.

We developed a simple profile search method for PSRO, which we call \newterm{backward profile search} (BPS, Algorithm~\ref{alg:Simple Backward Profile Search}).
Our method resembles that of \citet{brinkman2016shading}, but takes into account the sequence in which the strategies were generated.
At each iteration, BPS starts search from the strategies most recently added to the empirical game by PSRO, then searches potential deviations backward across previous PSRO iterations. 
Our motivation is that the newest strategies are most likely to participate in equilibria.
Once BPS confirms a solution of the empirical game, we apply RRD to the subgame over the support of this solution. 
By construction, this subgame is completely evaluated (as required by RD), whereas the entire empirical game payoff matrix is only partially evaluated.
In our experiments, we show that BPS can successfully find best-response targets in a three-player game, short of exhaustive evaluation of the empirical game.

\begin{algorithm}[!hptb] 
\caption{Backward Profile Search}
\label{alg:Simple Backward Profile Search}
\begin{algorithmic}[1] 
\REQUIRE Empirical game $\mathcal{\hat{G}}_{S\downarrow X}$ with partial payoff matrix.\\
\STATE Initialize subgame with strategy sets $Z=(Z_i)$, $i\in N$, where $ Z_i=\{s_\tau^i\}$, with $s_\tau^i$ the player~$i$ strategy added in the most recent PSRO iteration $\tau$.
\WHILE{True}
\STATE $\sigma \gets \text{NE\_search}(\hat{G}_{S\downarrow Z})$
\STATE deviation\_exists $\gets$ False
\FOR{player $i\in N$}
\STATE $s_i \gets \argmax_{s' \in X_i} u_i(s', \sigma_{-i}) - u_i(\sigma_i, \sigma_{-i})$
\IF{$s_i \notin Z_i$}
\STATE $Z_i \gets Z_i \cup \{s_i\}$
\STATE deviation\_exists $\gets$ True
\ENDIF
\ENDFOR
\IF{$\neg$ deviation\_exist}
\STATE \textbf{return} $\sigma$
\ENDIF

\STATE Evaluate missing profiles of $Z$ through simulation.
\ENDWHILE
\end{algorithmic}
\end{algorithm}

Figure~\ref{fig:cubes} illustrates the mechanism of BPS at the third iteration of PSRO, in which each player has four strategies. 
The $4\times 4\times 4$ cube in Figure~\ref{fig:cubes} represents the payoff matrix of the current empirical game. 
Green cells represent the payoffs of profiles that have been evaluated from previous iterations and white cells represent the payoffs of potential deviations from the equilibrium of the current subgame. 
The missing cells represent payoffs of profiles that have not been evaluated.
Note that the current payoff matrix is incomplete. 
To find the NE of the empirical game, BPS starts search from evaluating 
the subgame constituted by the most recently added strategy of each player, represented by the red cell.
Since the red cell is a pure-strategy profile, it is also the NE of the current subgame.
Next, BPS evaluates the payoffs of all potential deviations (white cells) from the red cell.
Suppose the blue cell is a profile with the largest deviation payoff for certain player $i$.
Then player $i$ adds the corresponding deviation strategy to her strategy set of the current subgame.
Now the profile space of the current subgame contains the red and the blue profiles.
Then BPS repeats evaluating all profiles in the current subgame, computing NE of the current subgame, and evaluating potential deviations from the NE.
Again, if the purple cell is a deviation profile, then the corresponding deviating strategy will be added to the strategy set of the subgame.
BPS repeats this procedure until the NE of the subgame is confirmed, that is, no beneficial deviation could be found in the empirical game.

\begin{figure}[!htpb]
\centering
\includegraphics[width=0.8\columnwidth]{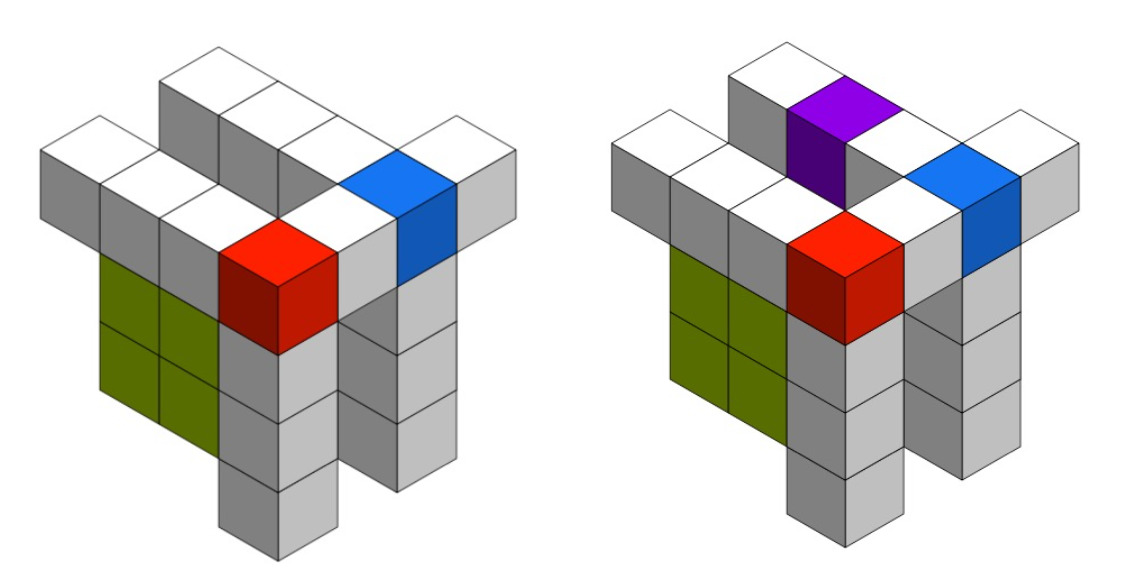}
\caption{An illustration of a partial payoff matrix of a three-player empirical game and the workflow of BPS. 
Green: evaluated profiles from previous PSRO iterations; White: deviation profiles from NE of current subgame; Red: the profile with the most recently added strategies; Blue and purple: profiles with the largest deviation payoffs.}\label{fig:cubes}
\end{figure}

\section{Performance and Analysis}
\subsection{Experimental Results}\label{sec: results}

\subsubsection{Two-player Leduc Poker}
In Figure~\ref{fig:regularization Leduc}, we test our algorithm on two-player Leduc poker and plot the regret curves (w.r.t the full game) given by FP, DO, PSRO with PRD, and PSRO with RRD (under two stopping criteria).
We first observe that RRD yields a rapid convergence to a low-regret value compared to other MSSs.
It is quite striking that RRD outperforms PRD (prior best known for this game) by such a large margin.

To show the benefits of using a regret threshold as a stopping criterion compared to a fixed number of RD updates, we plot the best regret curve of RRD using a fixed number of RD updates.
We observe that RRD performs better using a regret threshold. 
This is because the number of RD updates that produces the right level of regularization varies across empirical games. 

\begin{figure}[!htpb]
\centering
\includegraphics[width=0.95\columnwidth]{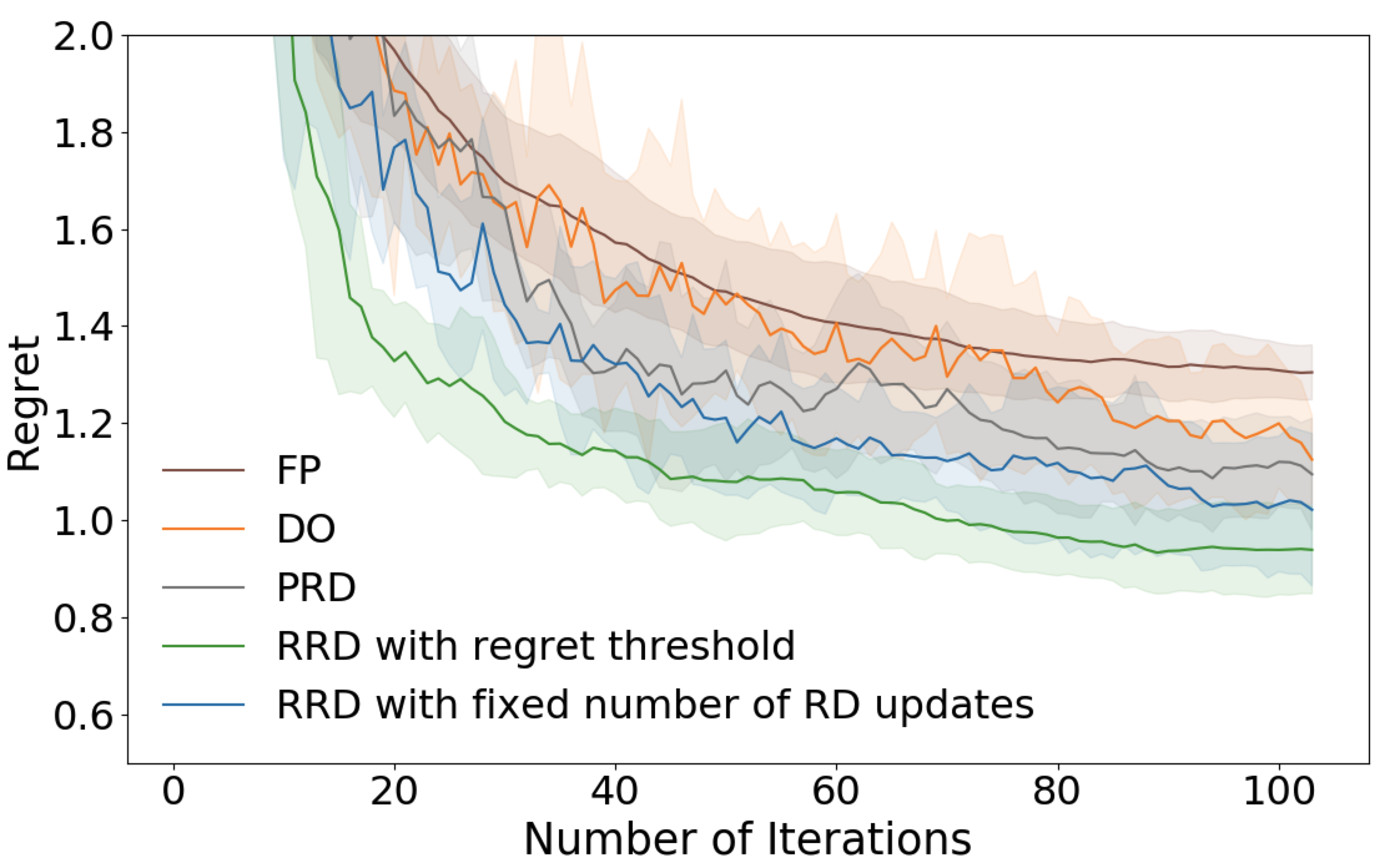}
\caption{RRD performance in two-player Leduc Poker.}\label{fig:regularization Leduc}
\end{figure}


\subsubsection{Real-World Games}
We further evaluate our algorithms in four of the ``real-world games'' studied by \citet{czarnecki2020real}: Hex, Connect four, Misere Tic Tac Toe, and Go. 
We observe that RRD exhibits faster convergence than FP and DO in all four games. 
We report further experimental details and full game descriptions in Appendix~\ref{app:std of hex}.


\subsubsection{Multi-Player Games}
We apply the combination of BPS and RRD to three-player Leduc poker. 
As shown in Figure~\ref{fig: RD_crd_3p}, although RRD is applied only to the subgame containing the support of exact equilibrium, learning  still benefits from regularization. 
In Table~\ref{tab: sims}, we list the average of number of profiles evaluated in the empirical game at different PSRO iterations with and without BPS.
To evaluate a profile, we estimate payoffs by averaging over 1000 samples. 
Employing BPS in this example saved approximately 10\% simulation effort, compared to exhaustive estimation. 
We can also see that evaluation savings become somewhat more significant as the number of iteration increases.
For example, from iterations 41 to 50 we evaluated 53890 profiles, while the profile space increased by 61000, a savings of 11.7\%. 
Experience in prior EGTA work demonstrates that the fractional savings are substantially greater for games with more than three players \citep{brinkman2016shading}.


\begin{figure}[!htpb]
\centering
\includegraphics[width=0.93\columnwidth]{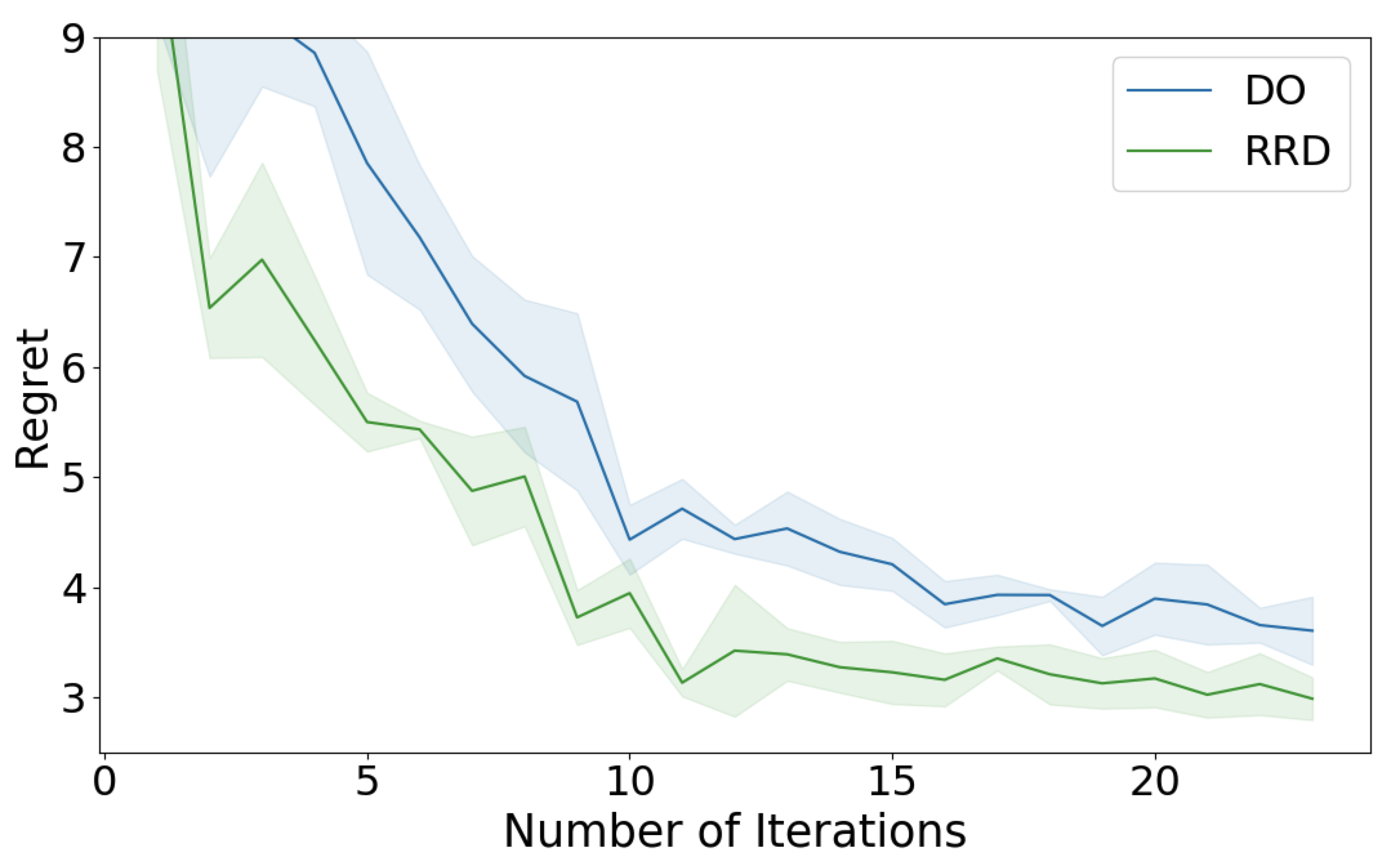}
\captionof{figure}{RRD performance in three-player Leduc poker.}
\label{fig: RD_crd_3p}
\end{figure}

\begin{table}[!ht]
\centering
\begin{tabular}{ rrrr }
    \toprule
    Iter\# & \#Profiles w. BPS & \#Profiles & Savings Pct. \\
    \midrule
    10 & 880   & 1000 & 12.0\%\\
    20 & 7100  & 8000 & 11.1\%\\
    30 & 24667 & 27000 & 7.5\%\\
    40 & 58400 & 64000 & 8.8\%\\
    50 & 112290& 125000 & 11.7\%\\
    \bottomrule
\end{tabular}
\caption{Savings in profile simulation due to BPS in a three-player game.}
\label{tab: sims}
\end{table}



\begin{figure*}[!t]
\centering
    \subfloat[Hex.]{\includegraphics[ width=0.245\textwidth]{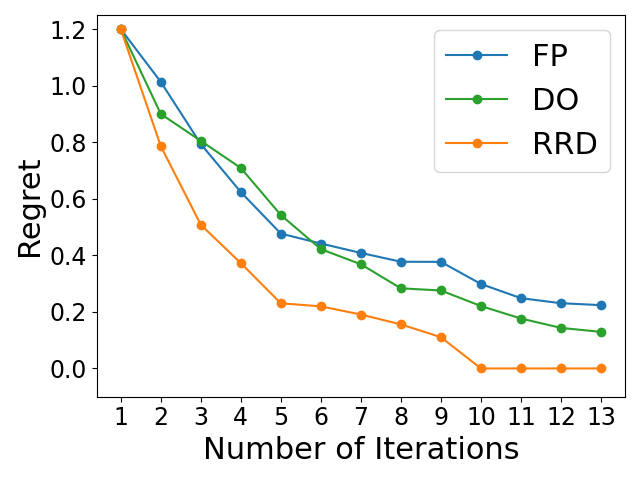}\label{fig:hex}}\
     \subfloat[Connect four.]{\includegraphics[width=0.245\textwidth]{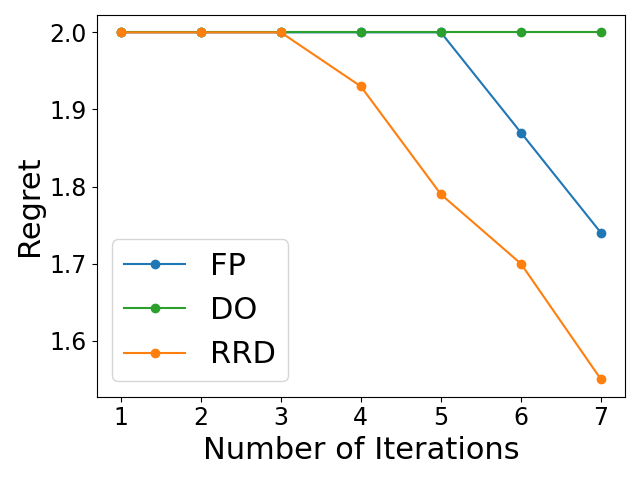}\label{fig:connect_four}}\
    \subfloat[Misere Tic Tac Toe.]{\includegraphics[ width=0.245\textwidth]{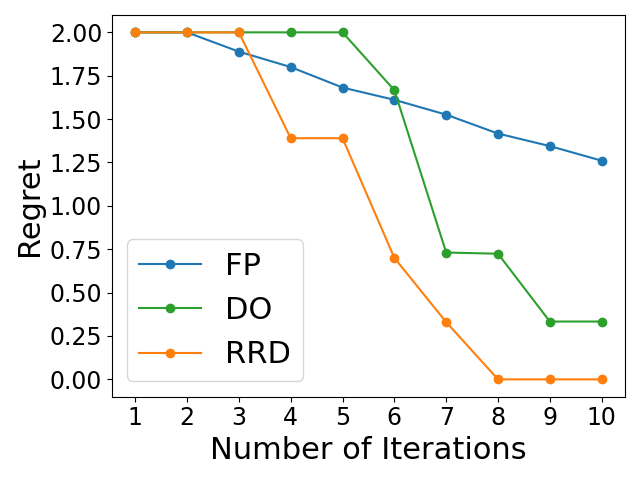}\label{fig:misere_tic}}\
    \subfloat[Go (size=4).]{\includegraphics[width=0.245\textwidth]{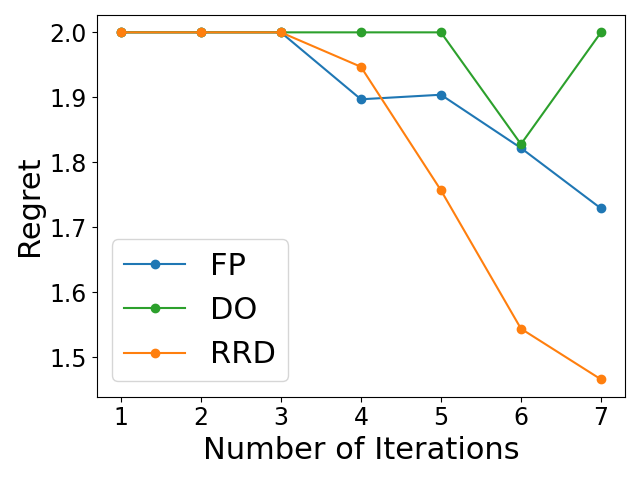}\label{fig:go}}\\
\caption{RRD performance compared to FP and DO in four games studied by \protect\citet{czarnecki2020real}.}
\end{figure*}

\subsubsection{Attack-Graph Games}
\newterm{Attack graphs} \cite{miehling2015planning} are a tool in cyber-security analysis employed to model the paths by which an adversary may compromise a system. 
An \emph{attack-graph game} is a two-player general-sum game defined on the attack graph where an attacker attempts to compromise a sequence of nodes to reach \emph{goal} nodes and a defender endeavors to protect any node (e.g., deny an access).
Reaching the \emph{goal} nodes within a finite horizon provides a large benefit for the attacker and a substantial loss for the defender.
Both offensive and defensive actions are associated with a cost.
The ability of the attacker to choose any subset of feasible nodes and of the defender to defend any subset of the nodes induces action spaces of combinatorial size. 

In Figure~\ref{fig: att_graph}, we show the performance of RRD on an attack-graph game instance with 100 nodes and hence $2^{100}$ possible combinatorial actions. 
Since the game is too large to analyze exhaustively, we first construct a particular set of deep Q-network (DQN) \cite{mnih2013playing} strategies with 125 strategically-diverse strategies in total, following the strategy sampling approach by \citet{czarnecki2020real}. Then we apply game-theoretic analysis (i.e., FP, DO, and RRD) to this set of strategies.
Each regret curve is an average over 5 randomly-selected initial strategies.
From Figure~\ref{fig: att_graph}, we observe that even though the game of interest is large and beyond two-player zero-sum games, RRD still boosts faster convergence and less variance than DO and FP\@.

\subsubsection{Bargaining Games}

We consider a non-zero-sum incomplete-information bargaining game, in which two players engage in a sequential process to reach a deal over a vector of items \citep{lewis2017deal}. 
In our setting, there are three items, and up to $T=10$ rounds of offers. 
The value of each item is sampled from a commonly known distribution for both players and each player only knows their own value realizations.
If a deal is reached at time $t\le T$, players receive their corresponding values for the items specified, discounted by a factor of $\gamma^t$ ($\gamma = 0.9$). 
If no deal is reached, both players receive zero utility. 
We represent negotiation policies using a neural network, and employ DQN to compute approximate best responses.

In Figure~\ref{fig: bargaining}, we consider social welfare (SW) as a performance measure and compare the averaged social welfare of the same solution concept in the empirical games given by different MSSs (i.e., RRD, DO, FP, MWCCE, and MGCCE). 
Each SW value is an average over 5 runs and we show the standard deviations in the Appendix~\ref{app:SD bargain}.
We select five solution concepts for performance comparison, including maximum SW pure strategy profile (Max SW), NE, uniform distribution over strategies, MWCCE, and MGCCE. 
From Figure~\ref{fig: bargaining}, we observe that RRD can generate all five solutions with higher social welfare than others. 
This observation also indicates that the best MSS to approximate a solution (e.g., NE, MWCCE, and MGCCE) may not be the one that uses the solution concept directly as a best response target.

\subsection{Stability with Varying Regret Threshold}

To investigate the stability of learning performance w.r.t the regret threshold $\lambda$, we select a wide range of $\lambda$s for RRD and compare the regrets at the last iteration of PSRO under these $\lambda$s with the regret of DO in two-player Leduc poker. 
We plot the regrets in Figure~\ref{fig:Regularization range}. 
From Figure~\ref{fig:Regularization range}, we observe that all $\lambda$s in the range yield a better learning performance than DO, which demonstrates the stability of the performance of RRD w.r.t the regret threshold $\lambda$.
In addition, we observe that as the value of regret threshold $\lambda$ increases, the learning performance first improves and then becomes worse. 
This means that either excessive or inadequate regularization would damage the overall learning performance.

\begin{figure}[!htpb]
\centering
\includegraphics[width=0.8\columnwidth]{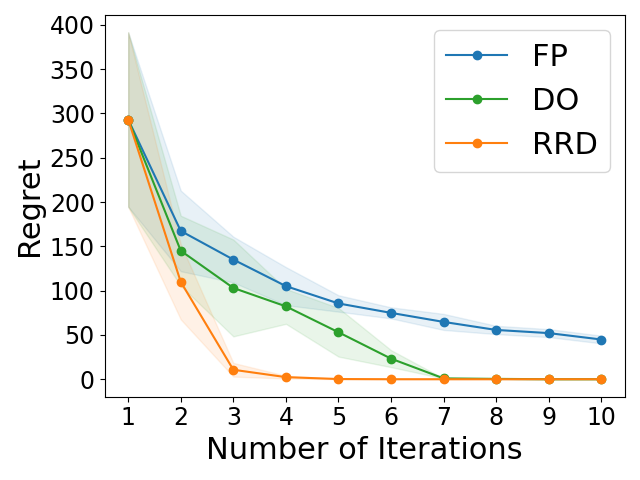}
\captionof{figure}{RRD outperforms FP and DO in the attack-graph game.}
\label{fig: att_graph}
\end{figure}



\begin{figure*}

 \center

  \includegraphics[width=\textwidth]{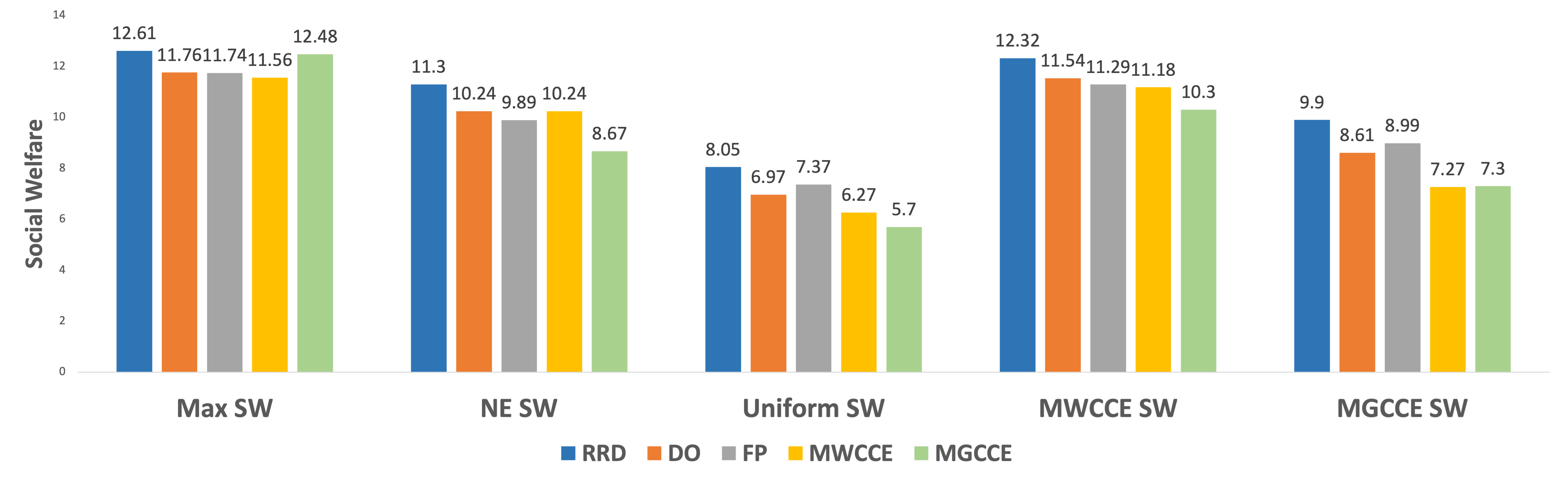}

  \caption{RRD performance in bargaining games. Each color represents an MSS and each bundle of colors shows the SW of a given solution concept in the corresponding empirical games.}

  \label{fig: bargaining}

\end{figure*}

\begin{figure*}[!ht]
\centering
    \subfloat[Range of regret thresholds.]{\includegraphics[width=0.495\textwidth]{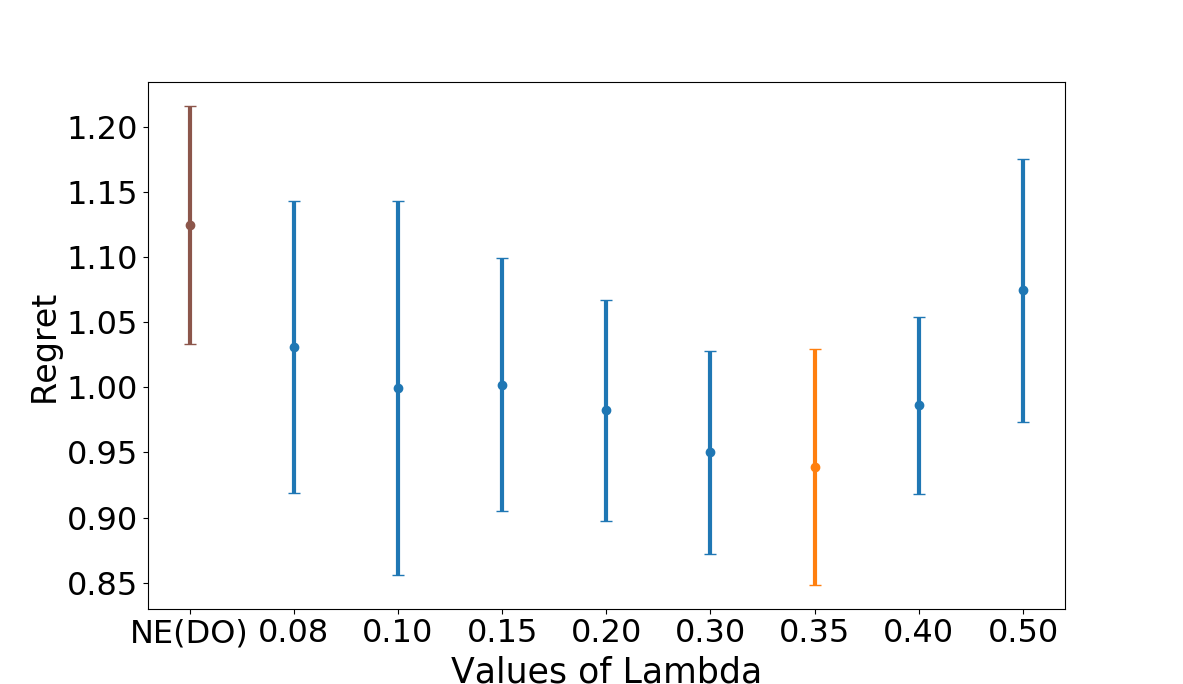}\label{fig:Regularization range}}\
    \subfloat[Decreased regret when regularization is applied.]{\includegraphics[width=0.435\textwidth]{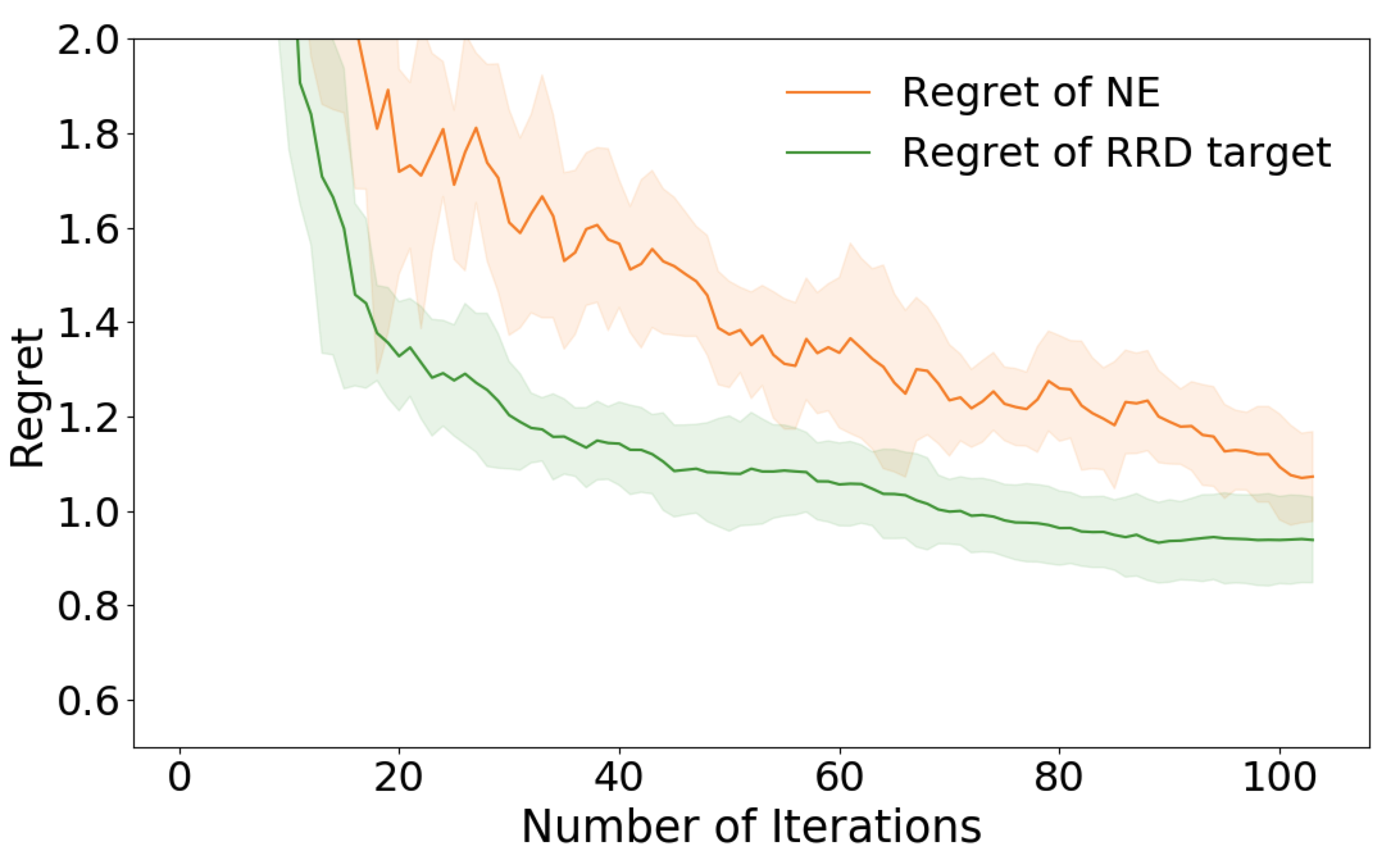}\label{fig: RRD NEvsRRD}}\\
\caption{Properties of learning with RRD in two-player Leduc Poker.}
\end{figure*}


\section{Explaining the Performance of MSSs}

Prior studies of strategy exploration recognized that best-responding to exact NE is not ideal, and demonstrated improvements over DO through alternative MSSs or other approaches.
However, to date there has been no satisfactory explanation for what makes for an effective best-response target.
We briefly discuss some prior studies, then provide a novel insight---based on our experimental observations---which helps to explain why regularization is helpful in this context.

\subsection{Prior Regularization-Related Approaches}

In an early study of strategy exploration, \citet{Schvartzman09a} found that adding noise to NE alters the path of equilibrium search and accelerated the overall learning. 
But how this noise contributed to the acceleration was unexplained. 
As noted above, the originators of PSRO introduced PRD expressly to address overfitting to NE \citep{Lanctot17}.
PRD promotes exploration by ensuring each strategy in the empirical game a fixed minimal probability in the best-response target. 
With similar motivation, \citet{wang19sywsjf} proposed to alternate FP and DO randomly, and \citet{Wright19} adjusted the best response to NE by tuning against previous opponents.

Related to this line of work, \citet{balduzzi2019open} introduced the term \newterm{Gamescape} to describe the conceptual strategy space covered by the empirical game.
They proposed an MSS called \newterm{rectified Nash}, designed to qualitatively extend the Gamescape.
However, the concept of Gamescape was described primarily through illustration in a particular simple game, which is instructive but does not provide operational definitions or measures. 

\subsection{A Novel Explanation}

Our key insight is that the performance of strategy exploration is strongly related to the regret of best-response targets \textit{w.r.t the full game}.
To illustrate this phenomenon, Figure~\ref{fig: RRD NEvsRRD} presents regret curves for PSRO with RRD in two-player Leduc poker.
The two curves show true-game regret for NE and RRD, respectively, as computed at each PSRO iteration.
Note that throughout the run, the regret of the RRD solution is much smaller than that of the empirical NE\@. 
We observe the same phenomenon in PSRO runs generated by other MSSs (see Appendix~\ref{app:decreased regret}).
In other words, whereas RRD has higher regret than NE in the empirical game ($\lambda$ versus zero), it reliably has lower regret in the full game.
Since our ultimate objective is a full-game low-regret solution, this helps to explain why the regularization imposed by RRD apparently provides robustly improved performance for strategy exploration. 

Note that this observation only goes so far; it is not the case that minimizing full-game regret always provides the optimal best-response target for strategy exploration.
This is because the lowest full-game regret profile may not change much from one PSRO iteration to the next, and so selecting targets on that basis may compromise the diversity of constructed empirical games. 
We tested this explicitly by using as an MSS the profile in the empirical game that has the lowest regret with respect to the full game.
Our results confirm that the extreme choice of target is indeed suboptimal for strategy exploration (details provided in Appendix~\ref{app: MRCP as MSS}).

\section{Conclusion}

We propose RRD as a novel MSS for PSRO, explicitly based on regularization. 
By controlling the regret threshold, the degree of regularization can be adjusted to suit a particular strategy exploration context.
In our experiments, we show that RRD outperforms several existing MSSs in various games and investigate many properties of learning with RRD\@. 
To help scale beyond two-player games, we propose BPS, a PSRO-compatible profile search method that avoids exhaustive simulation of the game matrix.
We show the benefit of regularization when combining BPS with RRD in three-player Leduc poker. 
Finally, we demonstrate that the performance of strategy exploration is strongly related to the regret of best-response targets and regularization could significantly decrease the regret of best-response targets, thus contributing to an improved learning.


\bibliographystyle{named}
\bibliography{thebib,wellman}


\appendix
\newpage
\clearpage

\section{Table of Acronyms}\label{app:acronyms}

\begin{table}[!htb]
    \centering
    \begin{tabular}{ll}
        \toprule
        \textbf{Abbreviation} & \textbf{Definition}\\
        \midrule
        BPS & Backward Profile Search\\
DO 			&	Double Oracle\\
EGTA 		&		Empirical Game-Theoretic Analysis\\
FP 		&		Fictitious Play\\
MSS 		&		Meta-Strategy Solver\\
MRCP 		&		Minimum Regret Constrained Profile\\
NE 		&		Nash Equilibrium\\
PRD 	&			Projected Replicator Dynamics\\
PSRO 	&			Policy Space Response Oracle\\
QRE & Quantal Response Equilibrium\\
RD & Replicator Dynamics\\
RL 		&		Reinforcement Learning\\
RRD & Regularized Replicator Dynamics\\
        \bottomrule
    \end{tabular}
    \caption{Table of acronyms in alphabetical order.}
\end{table}

\section{Extra Experimental Results}

\subsection{Advantage over RD with a Fixed Number of iterations}\label{app: fixed iter}

 

In Figure~\ref{fig:RRD iter}, we show the number of RD updates needed to reach the regret threshold $\lambda$.
We fix the lower bound of the number of iterations to 10000.
As shown in Figure~\ref{fig:RRD iter}, the number of RD updates required to reach the regret threshold $\lambda$ varies across PSRO iterations, which again emphasizes the need of adjusting the number of RD updates dynamically rather than fixing the number of RD updates.
Moreover, it is interesting to observe that the number of iterations is higher in the middle of the learning while it is lower at both the beginning and the end.
This contradicts the stereotype that RD needs more iterations to converge in an empirical games with more diverse strategies. 

\begin{figure}[!htpb]
\centering
\includegraphics[width=\columnwidth]{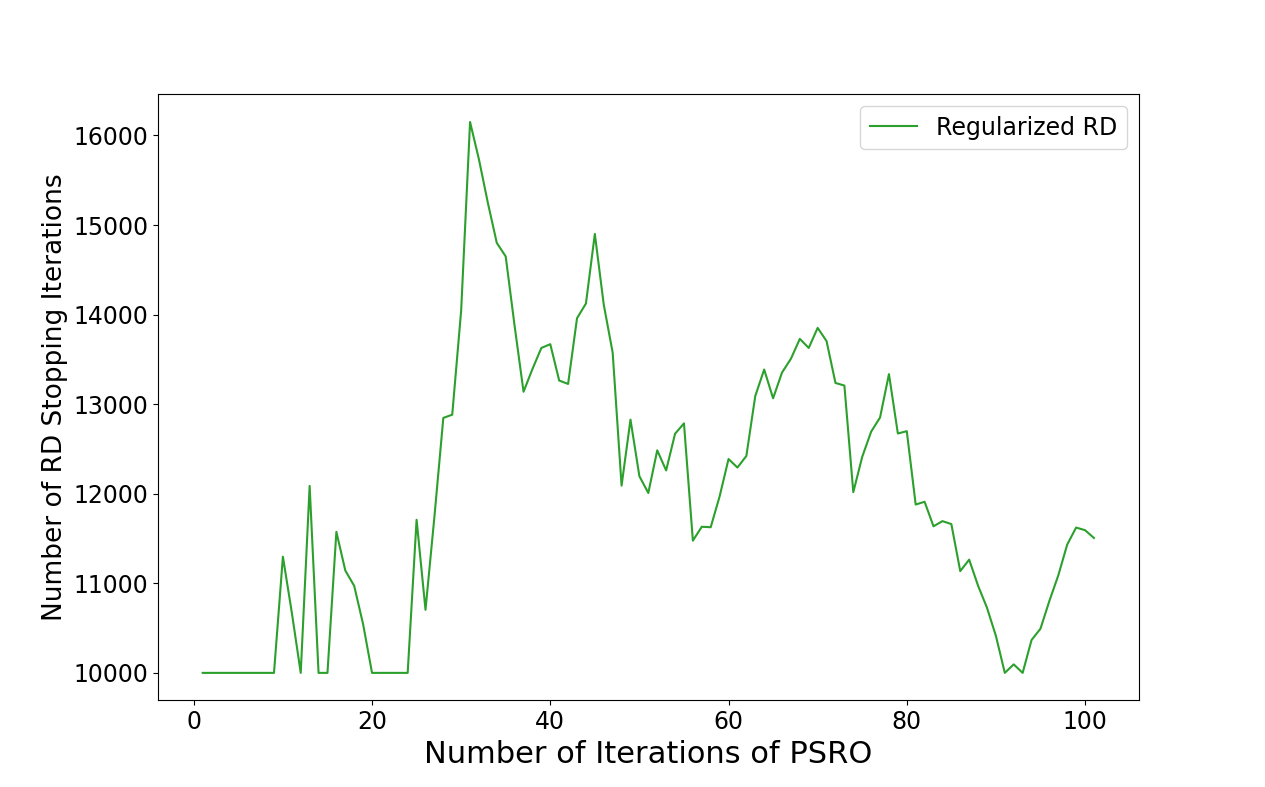}
\caption{The number of RD updates reaching the regret threshold $\lambda$ at different PSRO iterations.}\label{fig:RRD iter}
\end{figure}



\subsection{Game Descriptions} \label{app:std of hex}

\subsubsection{Real-world Game Instances}
The real-world games in our experiments are game models distilled from the full games by \citet{czarnecki2020real}. 
The descriptions of the games are shown as below.

\begin{itemize}
    \item Hex is a two-player board game, in which players attempt to connect opposite sides of a hexagonal board.
    \item Connect four is a two-player connection board game, in which the players choose a color and then take turns dropping colored tokens into the board. 
The objective of the game is to be the first to form a horizontal, vertical, or diagonal line of four of one's own tokens.
\item Misere Tic Tac Toe is a variant of Tic Tac Toe where one wins if and only if the opponent makes a line rather than itself.
\item Go with size 4 is a Go game with a smaller board than the original Go.
\end{itemize}

\subsubsection{Experimental Details for Real-World Games}

We evaluate all MSSs following the consistency metric by \citet{wang2022evaluating}, which measures the quality of intermediate game models. 
In all experiments, each player is initialized a set of poor-performing strategies at the beginning.
The set of initial strategies is identical for all MSSs.
Since these games are represented as matrix games, both best response computation and regret computation are deterministic.
Hence no error bar is reported.
For RRD, we set a regret threshold $\lambda=0.5$ for all experiments.

\subsubsection{Experimental Details for Bargaining Games}

We follow the bargaining game model by \citet{lewis2017deal}. 
Each player is given a uniformly generated value function, which gives a non-negative value for each item. 
The value functions are constrained so that: (1) the total value for a user of all items is 10; (2) each item has non-zero value to at least one user; and (3) some items have non-zero value to both players. 
These constraints enforce that it is not possible for both players to receive a maximum score, and that no item is worthless to both players, so the negotiation will be competitive. 
After 10 turns, players are able to complete the negotiation with no agreement, which is worth 0 points to both players.
For RRD, we set a regret threshold $\lambda=0.3$ for all experiments.

\subsubsection{Experimental Details for Attack-graph Games}

We follow the attack graph model of \citet{miehling2015planning}.
As described in the main paper, an attack graph is given by a DAG $\mathcal{G} = (V, E)$, where vertices $v \in V$ represent security conditions and edges $e \in E$ are labelled by exploits.
An \emph{attack-graph game} is defined by an attack graph endowed with additional specifications.
The game is a two-player partially observable stochastic game that is played over a finite number of time steps $T$\@.
At each time step $t$, the state of the game is given by the state of the graph, which is simply whether nodes are active or not (i.e., compromised by attacker or not), indicated by $s_{t}(v) \in \{0, 1\}$. 
It is assumed to be fully observable for the attacker while the defender receives a noisy observation $o_{t}(v) \in \{0, 1\}$ of the state, based on commonly known probabilities $P_v(o=0 \mid s=1)$ and $P_v(o=1 \mid s=0)$ for each node $v$.
Positive observations are called \emph{alerts}.

The attacker and defender act simultaneously at each time step~$t$. 
The defender's atomic action is to defend a node, thus, the atomic actions are simply~$V$. 
The defender's action space for any time step is $2^V$, meaning it can choose to defend any subset of the nodes. 
The attacker's atomic action set varies with time and is based on current graph state. 
Exploits on an edge are feasible only if the origin node of the edge is activated. Nodes without parents, called root nodes, can be attacked without preconditions. 
For the special case of attacking an \emph{AND} node, the attacker's atomic action is treated as attacking a node rather than choosing exploits on all incoming edges.
Thus, the attacker's atomic actions can be viewed as selecting edges (feasible exploits) for attacking \emph{OR} nodes or selecting nodes from \emph{AND} nodes whose parent nodes are all active. The attacker's action space at any time step
is the power set of feasible atomic actions.

Defender actions override attacker actions, that is, any node $v$ that is defended becomes inactive.
Otherwise, active nodes remain active; an \emph{AND} node $v$ that is attacked becomes active with probability $P(v)$, and any \emph{OR} node becomes active based on the success probabilities $P(e)$ of attacked edges.

Each \emph{goal} node, $v$, carries reward $R_A(v)$ for attacker and penalty $P_D(v)$ for defender for all time steps in which $v$ is active.
Any atomic action $a$ of an agent has a cost: $c_{a,D}(v)$ for nodes defended in case of defender; $c_{a,A}(v)$ for \emph{AND} nodes selection and $c_{a,A}(e)$ for edges selection in case of attacker. For simplicity, we omit the argument ($v$ or $e$) for action $a$ in the notation. 
When obvious from context we also drop the subscript $D$ and $A$, simply using $v_a$ and $c_a$ to denote the target node of $a$ and the cost of action $a$ respectively.

The defender's loss (negative reward) at any time step is the cost of all its atomic actions (i.e., total cost of nodes defended), plus the penalty for goal nodes active after the moves.
The defender's long-term payoff is the discounted expected sum of losses over time.
Similarly, the attacker's long-term payoff is the discounted expected sum of payoff per time step, where the per time step payoff is the reward for active goal nodes minus the cost of atomic actions used in that time step.

A policy for either player (pure strategy in the game) maps its observations at any step to a set of actions. 
For the attacker, the mapping is from states to action sets.
The defender only partially observes state, so its policy maps observation histories to action sets.
In our implementation, we limit the defender to a fixed length $h$ of past observations for tractability.
Solving the game means finding a pair of mixed strategies (distribution over pure strategies), one for each player, that constitutes a NE.

\subsection{Standard Deviations of Social Welfare in Bargaining Games}\label{app:SD bargain}
In Table~\ref{tab: std bargain}, we list the standard deviation of social welfare for difference solutions. 
This corresponds to Figure~\ref{fig: bargaining} in the main paper.
\begin{table}[!htb]
    \centering
    \begin{tabular}{lccccc}
        \toprule
        \textbf{MSSs} & Max SW & NE & Uniform & MWCCE & MGCCE\\
        \midrule
        RRD & 0.60 & 1.14 & 1.40 & 0.73 & 1.05 \\
        DO & 0.73 & 1.11 & 1.39 & 0.77 & 0.98 \\
        FP & 0.80 & 1.58 & 1.63 & 1.01 & 1.53 \\
        MWCCE & 0.67 & 1.11 & 1.66 & 0.70 & 2.49 \\
        MGCCE & 1.00 & 1.78 & 1.60 &1.11 &1.88 \\
        \bottomrule
    \end{tabular}
    \caption{Table of standard deviations for bargaining games.}\label{tab: std bargain}
\end{table}

\subsection{Decreased Regret by Regularization in PSRO with NE}\label{app:decreased regret}
In Figure~\ref{fig: PSRO with NE RRD vs NE}, we run PSRO with NE in two-player Leduc poker and plot the regret of the best-response targets (w.r.t the true game) given by RRD and the regret of NE (w.r.t the true game) in the intermediate empirical games. 
From Figure~\ref{fig: PSRO with NE RRD vs NE}, we observe that as in PSRO with RRD (Figure~\ref{fig: RRD NEvsRRD}), given an intermediate empirical game, the regret of best-response targets again decreases after applying regularization.

\begin{figure}[!htpb]
\centering
\includegraphics[width=0.95\columnwidth]{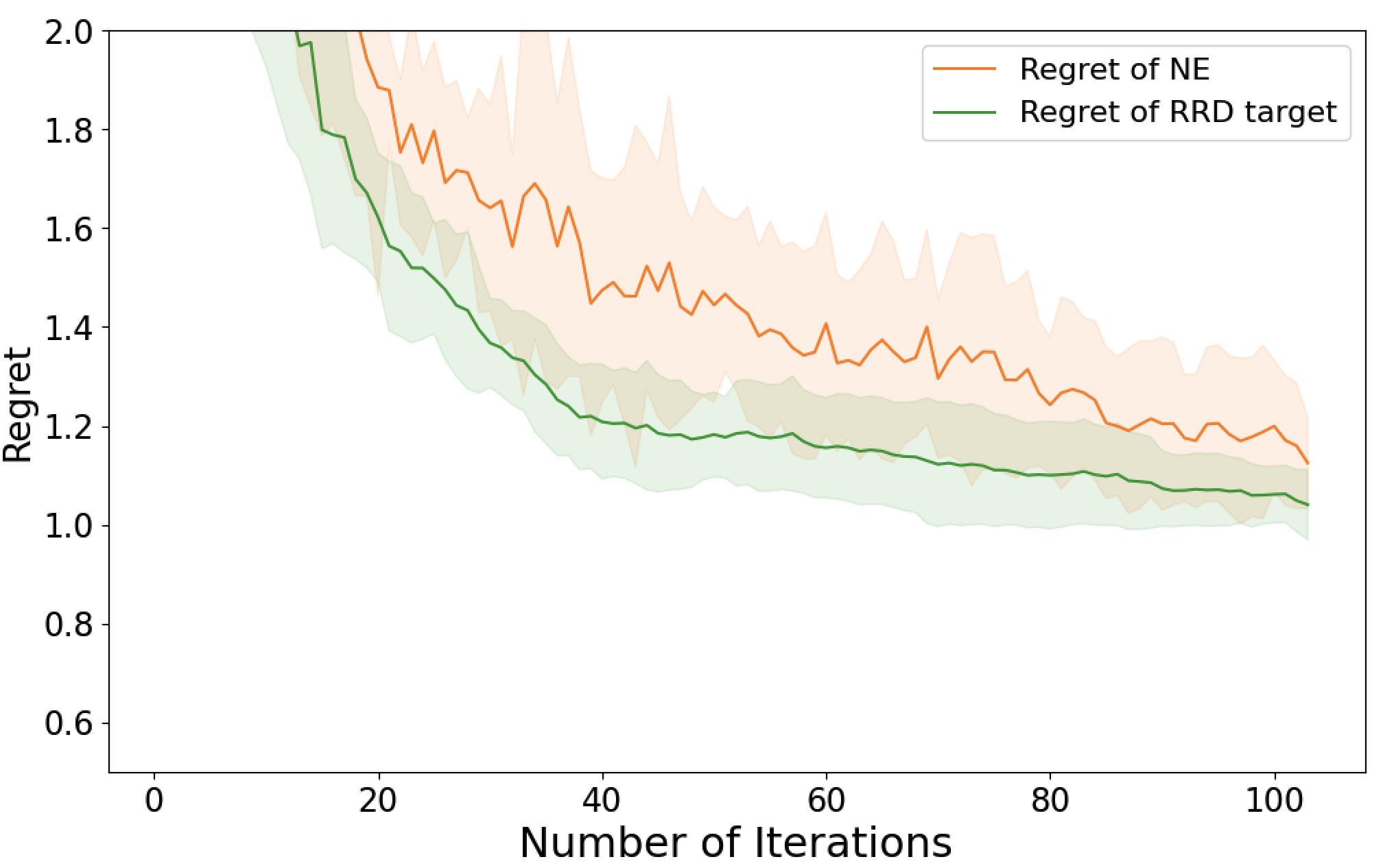}
\caption{Decreased regret when regularization is applied in PSRO run with NE as MSS.}\label{fig: PSRO with NE RRD vs NE}
\end{figure}

\subsection{Performance Check with Consistency Evaluation Solver}

In Figure~\ref{fig:regularization Leduc consistent}, we follow the rule of consistency \citep{wang2022evaluating}, a critical measure for the quality of generated empirical games. 
According to the consistency criterion, we compare MSSs with the same RRD-based regret and authenticate the faster improvement of RRD over DO and PRD in terms of the quality of generated empirical games.

\begin{figure}[!htpb]
\centering
\includegraphics[width=0.95\columnwidth]{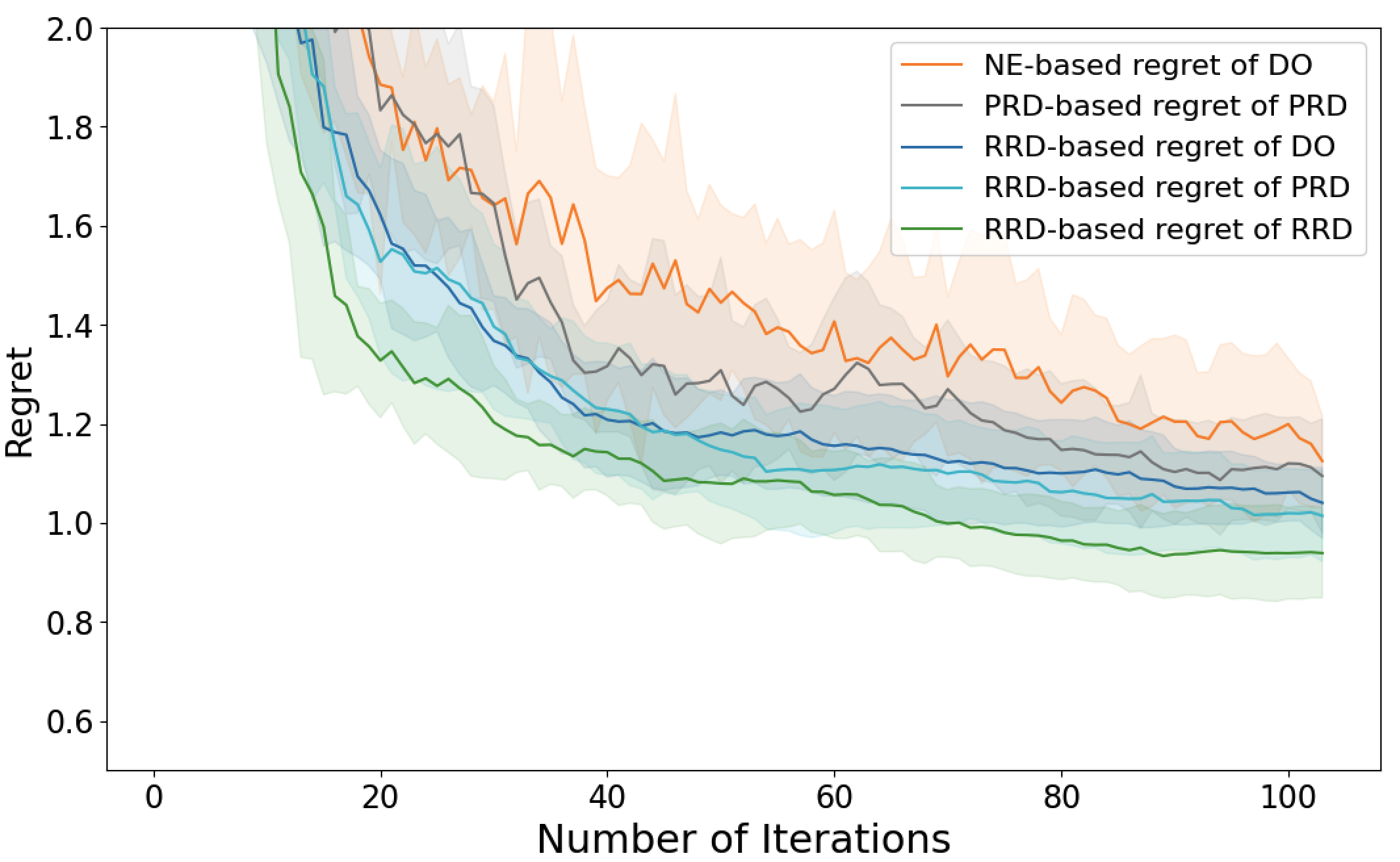}
\caption{RRD performance in two-layer Leduc poker with a consistency measure by \protect\citet{wang2022evaluating}.}\label{fig:regularization Leduc consistent}
\end{figure}

\subsection{Experimental Parameters} \label{app: hyperparams}
We use OpenSpiel \citet{lanctot2019openspiel} default parameter sets for experiments on Leduc: each payoff entry in an empirical game is an average of 1000 repeated simulations; DQN is adopted as a best response oracle, its parameters are shown in Table~\ref{tab:dqn_parameter}. 
The poker games are asymmetric in the sense that one player always moves first.

\begin{table}[!htb]
    \centering
    \begin{tabular}{lc}
        \toprule
        \textbf{Parameter} & \textbf{Value}\\
        \midrule
        learning rate & 1e-2\\
        Batch Size & 32\\
        Replay Buffer Size & 1e4\\
        Episodes & 1e4\\
        optimizer & adam\\
        layer size & 256\\
        number of layer & 4\\
        Epsilon Start & 1\\
        Epsilon End & 0.1\\
        Exploration Decay Duration & 3e6\\
        discount factor & 0.999\\
        network update step & 10\\
        target network update steps & 500\\
        \bottomrule
    \end{tabular}
    \caption{DQN hyperparameters}
    \label{tab:dqn_parameter}
\end{table}

PRD is implemented with lower bound for strategy probability 1e-10, maximum number of steps 1e5 and step size 1e-3. RD shares the same step size but a varying number of steps controlled by the regret threshold $\lambda$. We test the learning performance of RRD with $\lambda$ ranging from 0 to 0.6. We get best learning performance with $\lambda=0.35$ in Leduc poker. In three-player Leduc poker, we experiment with $\lambda=0.6$.

\begin{table}[!ht]
  \centering
\begin{tabular}{lc}
  \toprule
  \#Nodes & 100 \\
  Costs & Uniform in [0, 1] for attacker; \\
  &Uniform in [2, 4] for defender\\
  Rewards & Uniform in [10, 20] \\
  Penalties & Uniform in [7, 10]\\
  \#Goal Nodes & 6 \\
  Activation Prob. & Uniform in [0.6, 1] \\ 
  False Alarm Prob. &  Uniform in [0, 0.2]\\
  \bottomrule
\end{tabular}
\caption{Attack-graph game hyperparameters.} \label{tab:RGParams1}
\end{table}

\subsection{Code Availability}
Following the double-blind policy, we will publish our code after the review.

\subsection{Computational Resources}
We used one $2\times3.0$ GHz Intel Xeon Gold 6154 CPU with 16gb memories.

\section{PSRO with Other Novel MSSs}

\subsection{Results for QRE being an MSS}
One common assumption in game-theoretic analysis is the rationality of players (i.e., players act according to NE). 
Since our regularization approach prevents players from playing NE to some extent within the empirical game, it can be viewed as a way of restricting the rationality of players, which naturally relates our approach to Quantal Response Equilibrium (QRE) \citep{mckelvey1995quantal,mckelvey1998quantal}, an equilibrium notion with bounded rationality.
One common specification for QRE is logit equilibrium in which players' strategies take the following form 
\begin{displaymath}
    \sigma_i(s_i) = \frac{exp(\tau u_i(s_i, \sigma_{-i}))}{\sum_{s_i'\in S_i} exp(\tau u_i(s_i', \sigma_{-i}))}, \forall s_i\in S_i, i\in N.
\end{displaymath}
where $\tau$ is a parameter governing the rationality of players.

To see the performance of QRE being an MSS, we evaluate the learning performance of PSRO with QRE. 
Specifically, we compute the QRE of the empirical game at every PSRO iteration using Gambit \citep{mckelvey2006gambit} and analyze the learning performance of QRE.

\begin{figure}[!htpb]
\centering
\includegraphics[width=\columnwidth]{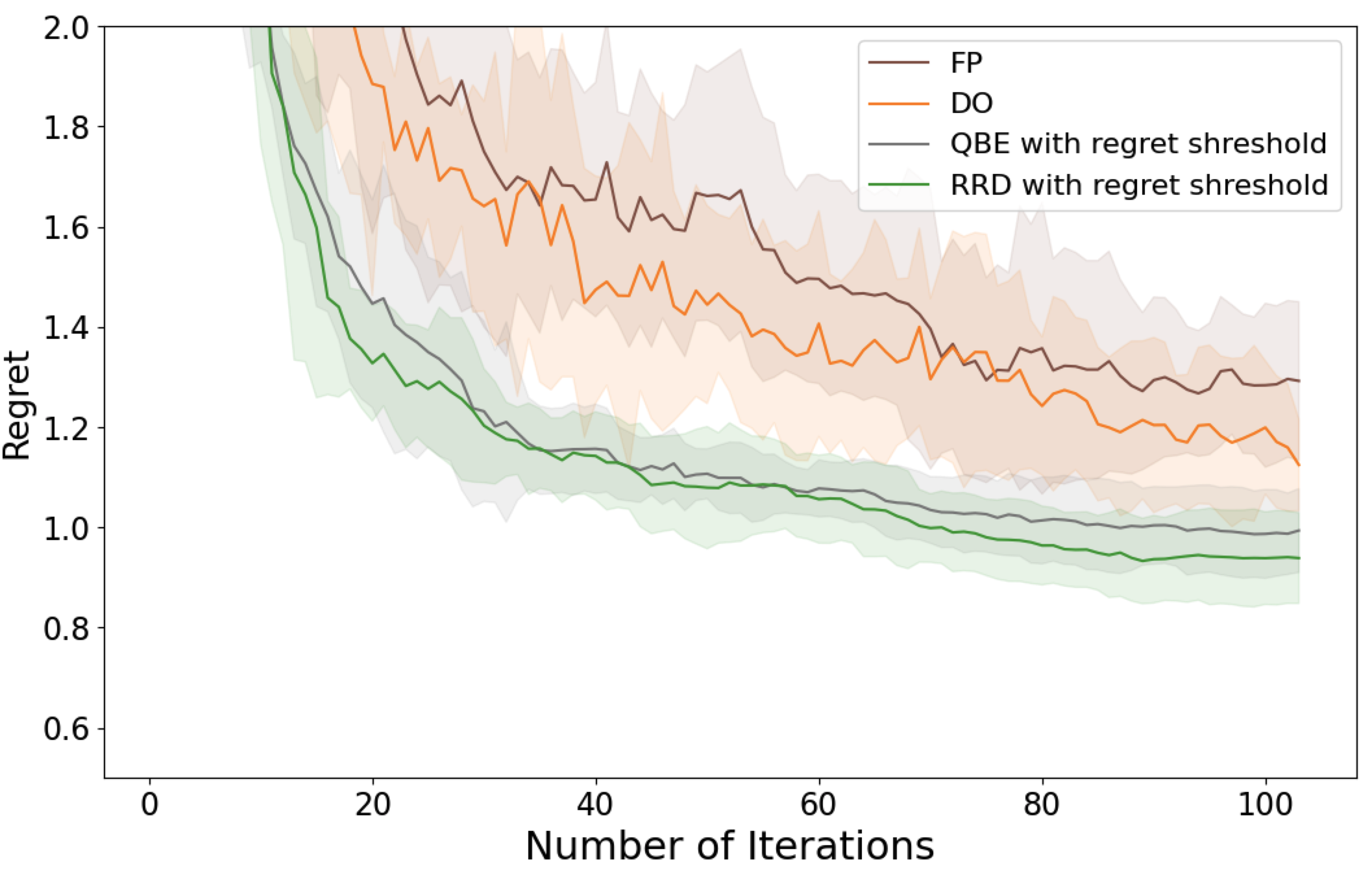}
\caption{Performance of PRSO with QRE in two-layer Leduc poker.}\label{fig: QRE}
\end{figure}

Figure~\ref{fig: QRE} shows the learning performance in Leduc poker with QRE as an MSS. 
For comparison, we also plot the learning curve of RRD with the same regret threshold of QRE.
Although QRE shows a slight divergence in the end, it still demonstrates the potential of using QRE as an MSS in PSRO.

\subsection{Results for Using MRCP as an MSS}\label{app: MRCP as MSS}
\subsubsection{Definition of MRCP}
The profile in the empirical game closest to being a solution of the full game is called \newterm{minimum-regret constrained-profile} (MRCP) \citep{Jordan10sw}.
Formally, $\bar{\sigma}$ is an MRCP iff: 
\begin{displaymath}
     \bar{\sigma} = \argmin_{\sigma\in \Delta(X)} \sum_{i\in N} \rho^{\mathcal{G}}_i(\sigma)
\end{displaymath}
The regret of MRCP thus provides a natural measure of how well $X$ covers the strategically relevant space \citep{wang2022evaluating}.

\subsubsection{Experiments of Learning with MRCP}
We have observed the existence of strategy profiles with lower global regret than NE in the empirical game and the experimental results of regularization shows that training against them results in improved learning performance than DO. 
One natural question to ask is whether training against the most stable profile targets can benefit strategy exploration the most (e.g., using MRCP as MSS). 

To answer this question, we compare the performance of MRCP as MSS against DO and FP in the matrix-form two-player Kuhn's poker and a synthetic two-player zero-sum game. 
In Kuhn's poker, we randomly select 4 starting points and implement PSRO. 
Fig.~\ref{fig:Kuhn's Poker1}-\ref{fig:Kuhn's Poker4} show that with 3 out of 4 starting points, MRCP converges slight faster than DO. 
For the matrix game, Fig.~\ref{fig:MRCP_zs1} and~\ref{fig:MRCP_zs2} show the benefits of applying MRCP but the performance varies across different starting points.

In Fig.~\ref{fig: MRCP as an MSS}, we observe that the MRCP has some power for heuristic strategy generation. 
However, the advantage of using MRCP is not satisfactory in terms of convergence rate and computational complexity.
We also find that using MRCP may converge slower in other games like Blotto compared to DO and PRD. 

\begin{figure*}[!htbp]
\centering
    \subfloat[Kuhn's Poker]{\includegraphics[height=3.2cm,width=0.32\linewidth]{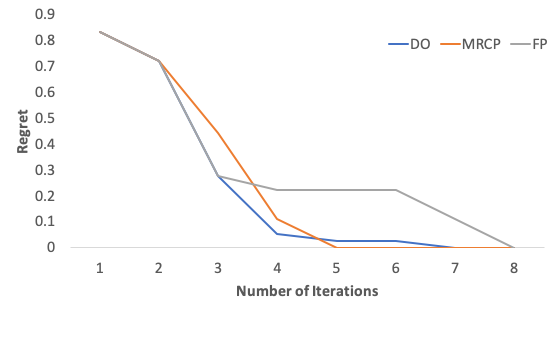}\label{fig:Kuhn's Poker1}}\ 
    \subfloat[Kuhn's Poker]{\includegraphics[height=3.2cm,width=0.32\linewidth]{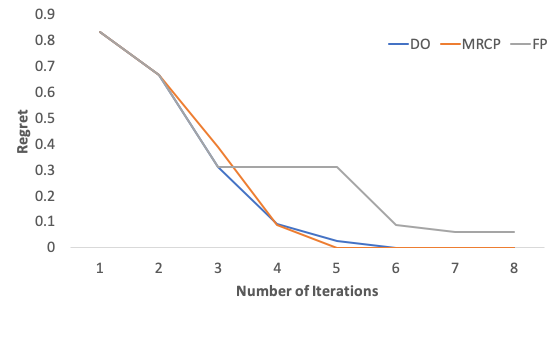}\label{fig:Kuhn's Poker2}}\
    \subfloat[Kuhn's Poker]{\includegraphics[height=3.2cm,width=0.32\linewidth]{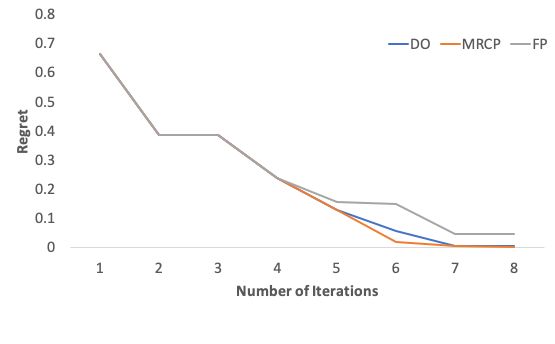}\label{fig:Kuhn's Poker3}}\\
    \subfloat[Kuhn's Poker]{\includegraphics[height=3.2cm,width=0.32\linewidth]{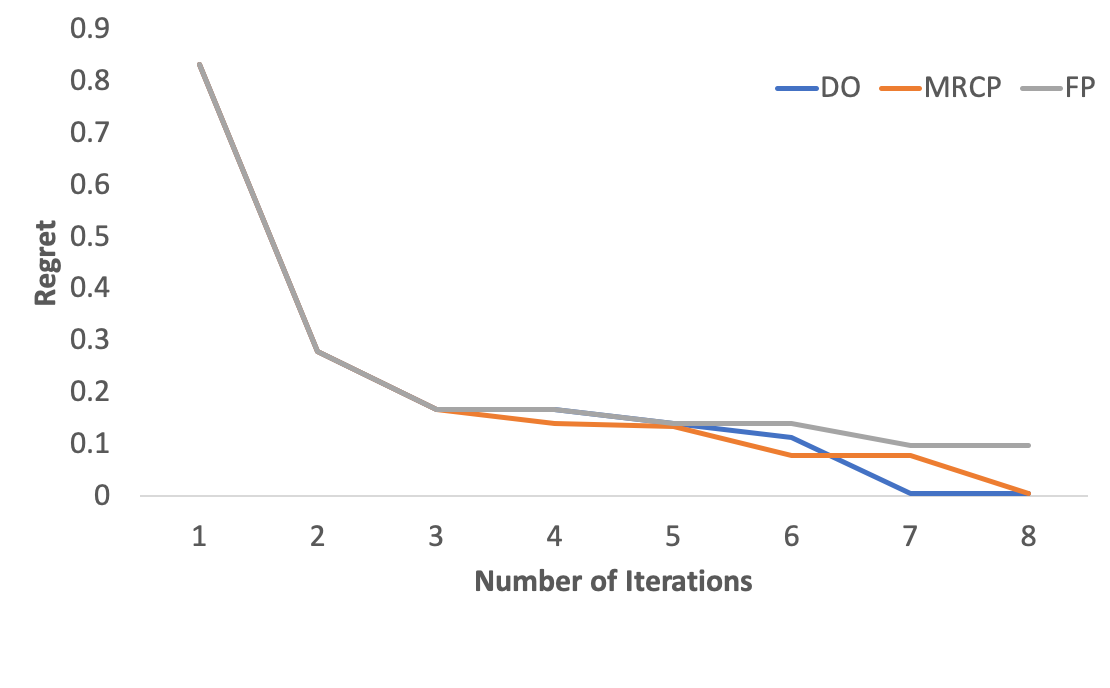}\label{fig:Kuhn's Poker4}}\
    \subfloat[Zero-Sum Game ]{\includegraphics[height=3.2cm,width=0.32\linewidth]{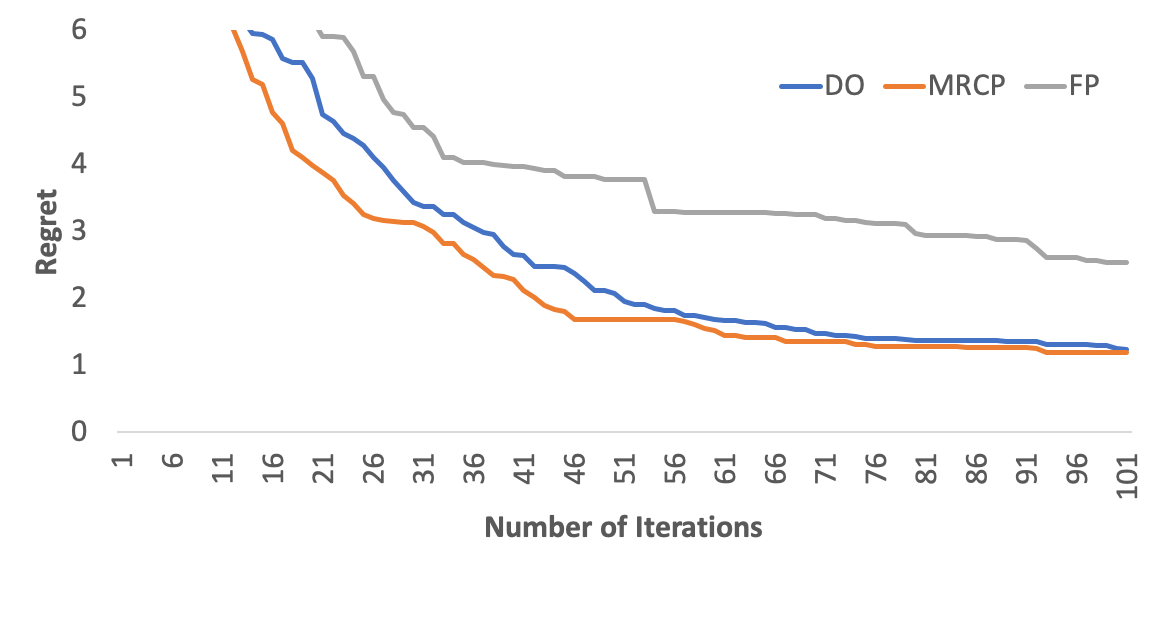}\label{fig:MRCP_zs1}}\
    \subfloat[Zero-Sum Game]{\includegraphics[height=3.2cm, width=0.32\linewidth]{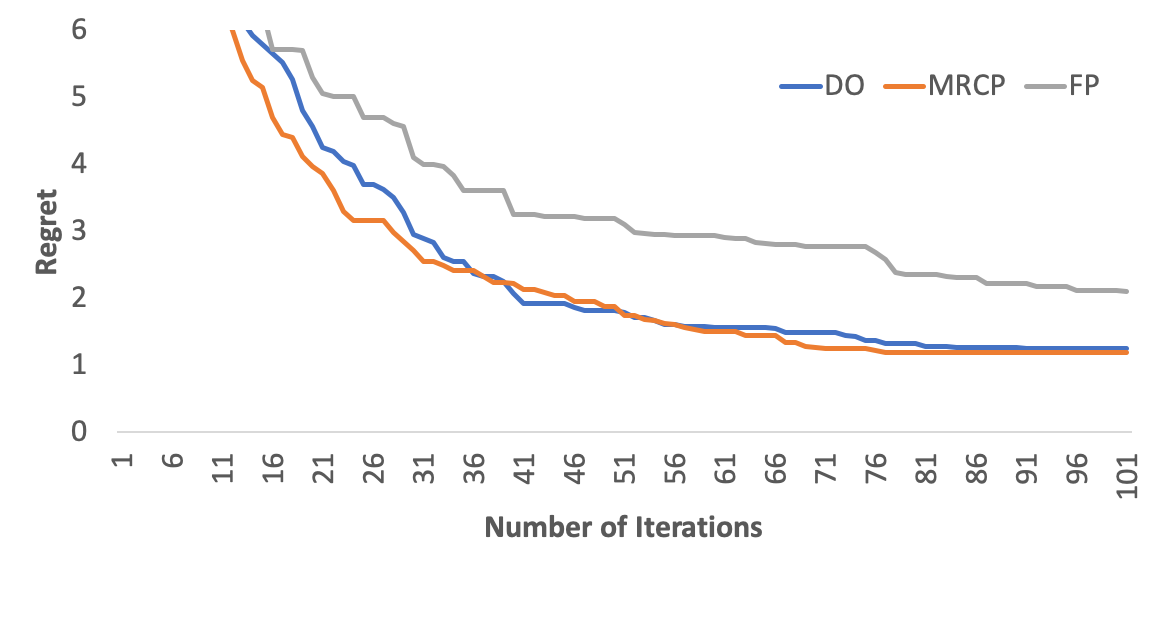}\label{fig:MRCP_zs2}}
\caption{Performance of using MRCP as an MSS. Y axis depicts MRCP-based regret.}\label{fig: MRCP as an MSS}
\end{figure*}

The experiments show that training against the lowest-regret profile in the empirical game does not necessarily lead to a better overall learning performance.
This is because the lowest-regret profile in the empirical game may not be changed a lot after adding a new strategy to the empirical game, which yields similar strategies continued to be added over PSRO iterations.
Continuing adding similar strategies would result in only little performance improvement over PSRO iterations.


Now we illustrate why pursuing best-response targets with extremely low regret may result in slow learning using a matrix game shown in Table~\ref{tab: game with long NE path}.
The matrix game contains 1000 strategies for each player. 
All missing entries of the payoff matrix are $(0,0)$. 
Let's start PSRO with the first strategy $(s_1, s_1)$. 
This matrix game is designed to have long equilibrium search path for DO (as in many real-world games). 
Specifically, by best-responding to $(s_1, s_1)$, each player adds $s2$ to the empirical game whose new NE is $(s_2, s_2)$. 
Similarly, if we best-respond to the equilibrium at each PSRO iteration, we would first get a new NE $(s_3, s_3)$ and then a long equilibrium path through the diagonal until we reach the NE of the full game $(s_{1000}, s_{1000})$.

Without loss of generality, suppose we are at iteration 2 (i.e., the empirical game includes $(s_1, s_2)$ and $(s_2, s_2)$ is an empirical NE). 
The MRCP of this empirical (symmetric) game is approximately $(1 s1, 0 s2)$ with regret 0.0112=0.022 (sum over players) (the regret of accurate MRCP is even lower). 
The regret of empirical NE is $0.1*2=0.2$ by deviating to $s_3$ from $(s_2, s_2)$. 
When best responding to MRCP, we add $s_{500}$ (only considering deviation strategies outside the empirical) and then MRCP remains the same (i.e., $(1 s_1, 0 s_2)$) for the empirical game. 
Therefore, further best responding to the MRCP may again add some strategies similar to $s_{500}$ and may not improve the learning performance dramatically.

Suppose RRD gives probability $(0.5, 0.5)$ on $(s_1, s_2)$, then best responding to $(0.5s_1, 0.5s_2)$ leads to equilibrium strategy $s_{1000}$ directly, jumping out of the long equilibrium path of DO. The regret of $(0.5s_1, 0.5s_2)$ is $(0.005 \times 0.5+0.199\times0.5 - 0.011\times0.25 - 0.1*0.25)\times2 = (0.102 - 0.02775)\times2 = 0.074252 = 0.1485$ (0.02 (regret of MRCP) $<$ 0.1485 (regret of RRD) $<$ 0.2 (regret of NE)).

In this example, best responding to the relatively low full-game regret profile, RRD avoids falling into the long diagonal path as DO. 
Meanwhile, its regret is not as low as the regret of MRCP so that the best-response target at each PSRO iteration would keep being updated significantly rather than staying similarly. 


\subsubsection{Extra Properties of Learning with MRCP}
Theoretically, multiple MRCPs could exist in an empirical game and MRCP is not necessarily a pure strategy profile in general. 
Moreover, purely using MRCP as an MSS does not guarantee convergence to NE since the best-responding strategy could already exist in the empirical game. We define this property of MRCP as follows. 

\theoremstyle{definition}
\begin{definition*}
An empirical game with strategy space $X\subseteq S$ is \newterm{closed} with respect to MRCP $\bar{\sigma}$ if
\begin{displaymath}
     \forall i \in N, s_i = \argmax_{s_i'\in S_i}u_i(s_i', \sigma_{-i}) \in X_i.
\end{displaymath}
\end{definition*}

To illustrate this concept, consider the symmetric zero-sum matrix game in Table~\ref{tab: MRCP closed}. Starting from the first strategy of each player and implementing PSRO with MRCP, we have the empirical game including $a^1$ and $a^2$. Since the $(a^1_1, a^1_2)$ is a MRCP (considered all pure and mixed strategy profiles) and best responding to the profile gives $a^2$ again, the empirical game is \textit{closed} and never extends to the true game wherein the true NE is $(a^3_1, a^3_2)$. 

\begin{table}[!ht]
    \centering
    \begin{tabular}{ |c|c|c|c| } 
 \hline
  & $a_2^1$ & $a_2^2$ & $a_2^3$  \\ 
 \hline
 $a_1^1$ & (0, 0) \textcolor{orange}{[2]} & (-1, 1)  \textcolor{orange}{[6]} & (-0.5, 0.5) \\ 
 \hline
 $a_1^2$ & (1, -1)  \textcolor{orange}{[6]} & (0, 0)  \textcolor{orange}{[10]} & (-5, 5) \\ 
 \hline
 $a_1^3$ & (0.5, -0.5) & (5, -5) & (0, 0) \\ 
 \hline
\end{tabular}
\caption{Symmetric Zero-Sum Game for MRCP. Regret of profiles is shown in the square parenthesis.}
\label{tab: MRCP closed}
\end{table}{}

In our experiments, we deal with this issue by only introducing new strategy with highest deviation payoff outside the empirical game and thus guarantee convergence. 
An alternative is to switch between DO and MRCP whenever this issue happens and the convergence is guaranteed due to the convergence property of DO.

\section{Convergence of RRD}\label{app: convergence}
\subsection{Proof of Convergence}
We first define the concept \newterm{$\epsilon$-closeness}, which can be viewed as a stopping condition of PSRO. 

\begin{definition*}[$\epsilon$-closeness]
An empirical game with strategy space $X\subseteq S$ is \newterm{$\epsilon$-closed} with respect to certain $\epsilon$-NE $\sigma \in \Delta(X)$ and operator $o$ if and only if $o(\sigma) \in X$. 
\end{definition*}

For example, if $o$ is a best-response operator and $\epsilon=0$, this definition means there is no beneficial deviation from the NE $\sigma$ of the empirical game, and thus $\sigma$ is a NE of the full game. When $\epsilon \ne 0$, $\epsilon$-closeness indicates that the deviation strategy of the $\epsilon$-NE $\sigma$ of the empirical game already exists in the empirical game.
Note that there could exist an infinite number of $\epsilon$-NE in an empirical game given a specific $\epsilon$, so the definition of $\epsilon$-closeness is associated with a specific $\epsilon$-NE.

Next we prove that if an empirical game is $\epsilon$-closed with respect to certain $\epsilon$-NE $\sigma \in \Delta(X)$ and best-response operator $o$, then $\sigma$ is an $\epsilon$-NE of the full game.

\begin{lemma} \label{lemma: closeness}
If an empirical game with strategy space $X\subseteq S$ is $\epsilon$-closed with respect to certain $\epsilon$-NE $\sigma \in \Delta(X)$ and best-response operator $o$, then $\sigma$ is an $\epsilon$-NE of the full game $\mathcal{G}$. 
\end{lemma}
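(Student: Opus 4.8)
The plan is to show that $\epsilon$-closedness forces the full-game deviation gain of $\sigma$ to coincide with its empirical-game deviation gain, after which the $\epsilon$-NE property transfers from the empirical game to $\mathcal{G}$ without further work. First I would unpack the hypothesis that $\sigma$ is an $\epsilon$-NE of the empirical game: by the definition of regret, this means that for every player $i$ we have $\max_{s_i' \in X_i} u_i(s_i', \sigma_{-i}) - u_i(\sigma_i, \sigma_{-i}) \le \epsilon$. Because $\sigma \in \Delta(X)$ and each deviation $s_i' \in X_i$, the profiles involved lie entirely within $X$, so the empirical payoffs agree with the full-game payoffs $u_i$ on exactly these profiles, which lets me phrase the bound in terms of $u$ throughout.

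Next I would invoke the $\epsilon$-closedness hypothesis with $o = \br$. By definition, $\epsilon$-closedness asserts $\br(\sigma) \in X$, i.e., for each player $i$ a full-game best response $\br_i(\sigma_{-i}) = \argmax_{\sigma_i' \in \Delta(S_i)} u_i(\sigma_i', \sigma_{-i})$ already lies in $X_i$. Since a best response to a fixed opponent profile is a linear maximization over the simplex $\Delta(S_i)$, its value is attained at a vertex, so I may take this best response to be a pure strategy belonging to $X_i$.

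The key step is then the observation that, because the full-game maximizer lies in the restricted set $X_i \subseteq S_i$, widening the deviation set from $X_i$ to $S_i$ cannot increase the achievable payoff. Concretely, chaining $\max_{s_i' \in S_i} u_i(s_i', \sigma_{-i}) = u_i(\br_i(\sigma_{-i}), \sigma_{-i}) \le \max_{s_i' \in X_i} u_i(s_i', \sigma_{-i}) \le \max_{s_i' \in S_i} u_i(s_i', \sigma_{-i})$ forces all three quantities to be equal. Consequently $\rho^{\mathcal{G}}_i(\sigma)$ equals the empirical regret of player $i$, which is at most $\epsilon$ by the first step. Since this holds for every player $i \in N$, no player can gain more than $\epsilon$ by unilateral deviation in $\mathcal{G}$, so $\sigma$ is an $\epsilon$-NE of the full game.

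The main obstacle is conceptual rather than computational: everything hinges on recognizing that $\epsilon$-closedness pins the \emph{unconstrained} best-response value to a value already realizable inside $X$, which is precisely what collapses the full-game regret onto the empirical regret. Two minor points of care remain: interpreting the set-valued correspondence $\br$ correctly (reading $\br(\sigma) \in X$ as the computed best response lying in $X$), and invoking payoff consistency between the empirical model and $\mathcal{G}$ on the deviation profiles. Both are routine once the chain of inequalities above is established.
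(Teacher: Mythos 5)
Your proof is correct and takes essentially the same route as the paper's: both arguments observe that $\epsilon$-closedness places the full-game best deviation inside $X_i$, so the maximum over $S_i$ collapses to the maximum over $X_i$ and the empirical regret bound of $\epsilon$ transfers directly to the full game. Your version merely spells out the chain of inequalities and the payoff-consistency point that the paper leaves implicit.
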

\begin{proof}
Since $\sigma$ is an $\epsilon$-NE in the empirical game, there is no deviation strategy within the empirical game that results in regret large than $\epsilon$. 
Mathematically, we have $\forall i\in N$, $\max_{s_i'\in X_i}u_i(s_i', \sigma_{-i})-u_i(\sigma_i, \sigma_{-i}) \le \epsilon$. 
Since the best-response operator finds the best deviation w.r.t the true game and the best deviation falls into the empirical game, we have $\forall i\in N$, $\max_{s_i'\in S_i}u_i(s_i', \sigma_{-i})-u_i(\sigma_i, \sigma_{-i}) \le \epsilon$. 
Then $\sigma$ is an $\epsilon$-NE of the full game $\mathcal{G}$. 
\end{proof}

Consider the finite strategy space $S$. We prove the following theorem that if we train against an $\epsilon$-NE of the empirical game at each iteration of PSRO, we end up with an empirical game containing at least one $\epsilon$-NE.
By setting $\epsilon$ to be a reachable $\lambda$, we prove the Theorem~\ref{thm:CDO convergence}.

\begin{theorem}\label{thm:CDO convergence}
Assuming the access to an exact best response oracle, Policy Space Response Oracle with Regularized Replicator Dynamics associated with a reachable regret threshold $\lambda$ converges to an empirical game containing at least one $\lambda$-NE. 
\end{theorem}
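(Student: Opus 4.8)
The plan is to combine two ingredients: the guarantee that RRD with a \emph{reachable} threshold delivers a genuine $\lambda$-NE of each intermediate empirical game, and a finiteness/termination argument that forces the best-response oracle to eventually ``close'' on a strategy already present. First I would establish the per-iteration property. Because $\lambda$ is reachable, the loop in Algorithm~\ref{alg: RD} exits with a profile $\sigma$ satisfying $\rho^{\mathcal{\hat{G}}_{S\downarrow X}}(\sigma)\le\lambda$. Since the profile regret is the sum of the nonnegative per-player regrets, $\sum_{i\in N}\rho^{\mathcal{\hat{G}}_{S\downarrow X}}_i(\sigma)\le\lambda$ forces $\rho^{\mathcal{\hat{G}}_{S\downarrow X}}_i(\sigma)\le\lambda$ for every player $i$, so the RRD target $\sigma$ is a $\lambda$-NE of the current empirical game. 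This places us exactly in the hypothesis setting of the $\epsilon$-closeness machinery with $\epsilon=\lambda$.

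Next I would run the termination argument. The full strategy space $S$ is finite, hence the product $\prod_{i\in N} S_i$ is finite and each empirical strategy set $X_i$ can be enlarged only finitely many times. At each PSRO iteration, the exact best-response oracle computes $\br_i(\sigma_{-i})$ against the RRD target $\sigma$ and either adds a genuinely new strategy to some $X_i$, strictly increasing $\abs{X}$, or adds nothing new for any player. The first case can occur only finitely often, so after finitely many iterations we reach an iteration at which $\br_i(\sigma_{-i})\in X_i$ for all $i\in N$; in the worst case this holds trivially once $X_i = S_i$ for every $i$. By the definition of $\epsilon$-closeness, the empirical game at this iteration is $\lambda$-closed with respect to the $\lambda$-NE $\sigma$ and the best-response operator.

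Finally I would invoke Lemma~\ref{lemma: closeness} with $\epsilon=\lambda$: since the empirical game is $\lambda$-closed with respect to $\sigma$ and the best-response operator, $\sigma$ is a $\lambda$-NE of the full game $\mathcal{G}$. Because $\sigma\in\Delta(X)$, the terminal empirical game contains at least one $\lambda$-NE, which is the desired conclusion.

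I expect the main obstacle to be stating the termination step cleanly rather than any deep calculation, as it is at heart a monotone-set-growth-in-a-finite-universe argument. Care is needed to (i) align the best-response operator appearing in the $\epsilon$-closeness definition with the oracle actually invoked in Algorithm~\ref{alg: PSRO}, and (ii) handle the case where the best response is non-unique, so that ``$\br_i(\sigma_{-i})\in X_i$'' must be read as \emph{some} maximizer already lying in $X_i$, which is precisely what Lemma~\ref{lemma: closeness} needs in order to equate $\max_{s_i'\in S_i}u_i(s_i',\sigma_{-i})$ with $\max_{s_i'\in X_i}u_i(s_i',\sigma_{-i})$. A secondary subtlety is confirming that the word ``reachable'' is used exactly to rule out RRD terminating at the iteration cap $M$ while returning only the lowest-regret profile found so far with regret possibly above $\lambda$; without reachability, $\sigma$ might fail to be a $\lambda$-NE of the empirical game and the closeness hypothesis would not apply.
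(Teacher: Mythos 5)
Your proof is correct and follows essentially the same route as the paper's: finiteness of $S$ forces the monotonically growing empirical strategy sets to stabilize, at which point $\lambda$-closeness holds for the RRD target, and Lemma~\ref{lemma: closeness} then upgrades that target to a $\lambda$-NE of the full game. Your write-up is in fact more explicit than the paper's on two points it leaves implicit---that the sum-regret stopping criterion of RRD bounds each player's individual regret by $\lambda$ (so the target really is a $\lambda$-NE of the empirical game), and that ``reachable'' is exactly what rules out RRD exiting at the iteration cap with regret above $\lambda$---but these are elaborations of the same argument, not a different one.
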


\begin{proof}
Since we have finite strategy space $S$, $\epsilon$-closeness with respect to certain $\sigma$ is always reachable by training against an $\epsilon$-NE at each iteration. 
Once the $\epsilon$-closeness is reached, the corresponding $\sigma$ is an $\epsilon$-NE of the full game due to Lemma~\ref{lemma: closeness}. 
Due to the population property of PSRO \citep{wang2022evaluating}, profiles with lower regret than $\epsilon$ could also exist. 
\end{proof}
This result generalizes DO and its convergence guarantee to scenarios where training target is not strictly restricted to NE\@. 
Note that similar results have been proved in earlier works \citep{Dinh22,mcaleer2021cfr}. 
However, they miss the fundamental fact of an empirical game, that is, an empirical game creates a profile space, within which profiles with regret much smaller than the target profile could exist. 
This fact reveals one major advantage of using game models to facilitate game learning since with a game model, we simply need to capture a NE within the strategy space rather than requiring a sequence of profiles converge to NE as in the online learning setting.

\subsection{Performance Confusion}
A common confusion is why RRD is useful in the sense that RRD has weaker theoretical convergence guarantee than DO. 
Note that the proof for the convergence of DO by \citet{mcmahan2003planning} describes whether it converges rather than how fast it converges. 
In the worst case, DO needs to add all strategies (assuming the game is finite) to the empirical game to reach the convergence, which is apparently trivial.
In games of interest (e.g., Leduc poker), the strategy space is usually very large and cannot be enumerated.
Despite algorithms like DO guarantee a convergence to NE, it is possible that such algorithms need to explore a large number of strategies to reach the convergence, which is far beyond the computational budget.
The goal of strategy exploration is to learn towards a NE as close as possible within the budget.
Therefore, strategy exploration focuses more on the learning performance of MSSs within the computational budget and care less about the long-term convergence beyond the budget.
Although RRD has weaker convergence guarantee in the limit, it is able to learn stable profiles quickly within the budget by controlling the degree of regularization, which fulfills the spirit of strategy exploration.
If an absolute convergence is pursued, one can always switch RRD to DO at certain PSRO iteration after useful strategies have been quickly generated by RRD.

\begin{table}[!ht]
    \centering
    \begin{tabular}{ |c|c|c|c|c|c|c|c| } 
 \hline
  & $a_2^1$ & $a_2^2$ & $a_2^3$ & ... & $a_2^{500}$ & ... & $a_2^{1000}$\\ 
 \hline
 $a_1^1$ & (0, 0) & (0, 0.011) & (0, 0) & ... & (0. 0.01)& ...& (0,0.005) \\ 
 \hline
 $a_1^2$ & (0.011, 0) & (0.1, 0.1) & (0.1, 0.2) & ... & ...& ... & (0,0.199) \\ 
 \hline
 $a_1^3$ & (0, 0) & (0.2, 0.1) & (0.2, 0.2) & ... & ...& ... & (0, 0)\\ 
 \hline
 ... & ... & ... & ... & ... & ... & ...& ...\\ 
 \hline
 $a_1^{500}$ & (0.01, 0) & ... & ... & ... & ...& ... & (0, 0) \\ 
 \hline
 ... & ... & ... & ... & ... & ... & ...& ...\\ 
 \hline
 $a_1^{1000}$ & (0.005, 0) & (0.199, 0) & (0, 0) & ... & (0, 0)& ... & (100, 100) \\ 
 \hline
\end{tabular}
\caption{A game instance with a long NE path.}
\label{tab: game with long NE path}
\end{table}{}

\end{document}